\definecolor{lgreen}{rgb}{0.0, 0.48, 0.0}
\definecolor{lpurple}{rgb}{0.48, 0.0, 0.48}
\definecolor{bblue}{rgb}{0.2, 0.4, 0.8}
\definecolor{bblue}{rgb}{0.2, 0.4, 0.8}
\definecolor{bgreen}{rgb}{0.2, 0.6, 0.4}
\definecolor{bred}{rgb}{0.8, 0.4, 0.2}
\definecolor{bviolet}{rgb}{0.7, 0.2, 0.7}
\definecolor{blackred}{rgb}{0.6, 0.3, 0.3}
\definecolor{blackblue}{rgb}{0.3, 0.3, 0.6}
\theoremstyle{definition}
\newtheorem{theorem}{Theorem}[section]
\newtheorem{prop}[theorem]{Proposition}
\newtheorem{definition}[theorem]{Definition}
\newtheorem{example}[theorem]{Example}
\newtheorem*{remark}{Remark}
\newcommand{\lf}[2]{{\raisebox{.8pc}{\tiny
\xymatrix@=.05pc{&{#2}\ar@{-}[dl]\\{#1}}}}}
\newcommand{\ri}[2]{{\raisebox{.8pc}{\tiny
\xymatrix@=.05pc{{#1}\ar@{-}[dr]\\&{#2}}}}}
\newcommand{\wh}{\textbf{when} \quad}
\newcommand{\xym}[1]{\raisebox{10pt}{$\xymatrix@C=3pt@R=5pt{#1}$}}
\newcommand{\xymm}[1]{\raisebox{18pt}{$\xymatrix@C=3pt@R=5pt{#1}$}}
\newcommand{\xyn}[1]{\raisebox{10pt}{$\xymatrix@C=2pt@R=5pt{#1}$}}
\newcommand{\idx}[1]{\mbox{\underline{\sf #1}}}
\newcommand{\deriv}[2]{\frac{\partial}{\partial #1} #2}
\newcommand{\at}[2][]{#1|_{#2}}
\let\succ\relax
\DeclareMathOperator*{\succ}{\mbox{\sf S}}
\newcommand{\cal}[1]{\mathcal{#1}}
\newcommand{\luterm}{$`l`y$\nobreakdash-term}
\newcommand{\luterms}{$`l`y$\nobreakdash-terms}
\newcommand{\lterm}{$`l$\nobreakdash-term}
\newcommand{\lterms}{$`l$\nobreakdash-terms}
\newcommand{\lcalculus}{$`l$\nobreakdash-calculus}
\newcommand{\lucalculus}{$`l`y$\nobreakdash-calculus}
\newcommand{\breduction}{$`b$\nobreakdash-reduction}
\renewcommand\autoref\Cref
\begin{document}

\title{Combinatorics of explicit substitutions}
\author{Maciej Bendkowski}
\address{Theoretical Computer Science Department\\
  Faculty of Mathematics and Computer Science\\
  Jagiellonian University\\
  ul. Prof. {\L}ojasiewicza 6, 30-348 Krak\'ow, Poland.}
\email{maciej.bendkowski@tcs.uj.edu.pl}
\author{Pierre Lescanne}
\address{University of Lyon\\
        \'Ecole normale sup\'erieure de Lyon\\
        LIP (UMR 5668 CNRS ENS Lyon UCBL)\\
        46 all\'ee d'Italie, 69364 Lyon, France.}
\email{pierre.lescanne@ens-lyon.fr}
\thanks{Maciej Bendkowski was partially supported within the Polish National
Science Center grant 2016/21/N/ST6/01032.}

\begin{abstract}
    $`l`y$ is an extension of the $\lambda$-calculus which internalises the
    calculus of substitutions. In the current paper, we investigate the
    combinatorial properties of $`l`y$ focusing on the quantitative aspects of
    substitution resolution.  We exhibit an unexpected correspondence between
    the counting sequence for \luterms~and famous Catalan numbers. As a
    by-product, we establish effective sampling schemes for random \luterms. We
    show that typical \luterms~represent, in a strong sense, non-strict
    computations in the classic \lcalculus. Moreover, typically almost all
    substitutions are in fact suspended, i.e.~unevaluated, under closures.
    Consequently, we argue that $`l`y$ is an intrinsically non-strict calculus
    of explicit substitutions. Finally, we investigate the distribution of
    various redexes governing the substitution resolution in $`l`y$ and
    investigate the quantitative contribution of various substitution
    primitives.
\end{abstract}

\maketitle

\section{Introduction}
Substitution of terms for variables forms a central concept in various formal
calculi with qualifiers, such as predicate logic or different variants of
\lcalculus. Though substitution supports the computational character of
\breduction~in \lcalculus, it is usually specified as an external meta-level
formalism, see~\cite{barendregt1984}. Such an epitheoretic presentation of
substitution masks its execution as a single, indivisible calculation step, even
though it requires a considerable computational effort to carry out,
see~\cite{peytonJones1987}. In consequence, the number of \breduction~steps
required to normalise a \lterm~does not reflect the actual operational cost of
normalisation. In order to effectuate substitution resolution, its process
needs to be decomposed into a series of fine-grained atomic rewriting steps
included as part of the considered calculus.

An early example of such a calculus, internalising the evaluation of
substitution, is combinatory logic~\cite{CurryFeys1958}; alas, bearing the price
of loosing the intuitive, high-level structure of encoded functions, mirrored in
the polynomial blow-up of their representation, see~\cite{joy1984,joy1985}.
Focusing on retaining the basic intuitions behind substitution, various calculi
of explicit substitutions highlighting multiple implementation principles of
substitution resolution in \lcalculus~were proposed throughout the years,
cf.~\cite{deBruijn1978,abadi1991,Lescanne1994,DBLP:journals/jar/RoseBL12}. The
formalisation of substitution evaluation as a rewriting process provides a
formal platform for operational semantics of reduction in \lcalculus~using
abstract machines, such as, for instance, the Krivine
machine~\cite{curien93:categ_combin}. Moreover, with the internalisation of
substitution, reduction cost reflects more closely the true computational
complexity of executing modern functional programs.

Nonetheless, due to the numerous nuances regarding the evaluation cost of
functional programs (e.g.~assumed reduction or strictness strategies) supported
by a general tradition of considering computational complexity in the framework
of Turing machines or RAM models rather than formal calculi, the evaluation cost
in term rewriting systems, such as \lcalculus~or combinatory logic, gains
increasing attention only quite recently,
see~\cite{DBLP:journals/corr/abs-1208-0515,DBLP:conf/icfp/AvanziniLM15,DBLP:journals/corr/AccattoliL16}.
The continuing development of automated termination and complexity analysers for
both first\nobreakdash- and higher\nobreakdash-order term rewriting systems
echoes the immense practical, and hence also theoretical, demand for complexity
analysis frameworks of declarative programming languages, see
e.g.~\cite{aprove2014,lmcs:749}. In this context, the computational analysis of
first\nobreakdash-order term rewriting systems seems to most accurately reflect
the practical evaluation cost of declarative
programs~\cite{DBLP:conf/cade/CichonL92,DBLP:journals/jfp/BonfanteCMT01}.
Consequently, the average-case performance analysis of abstract rewriting
machines executing the declared computations requires a quantitative analysis of
their internal calculi. Such investigations provide not only key insight into
the quantitative aspects of basic rewriting principles, but also allow to
optimise abstract rewriting machines so to reflect the quantitative contribution
of various rewriting primitives.

Despite their apparent practical utility, quantitative aspects of term rewriting
systems are not well studied.  In~\cite{DBLP:journals/tcs/ChoppyKS89} Choppy,
Kaplan and Soria provide a quantitative evaluation of a general class of
confluent, terminating term rewriting systems in which the term reduction cost
(i.e.~the number of rewriting steps required to reach the final normal form) is
independent of the assumed normalisation strategy. Following a similar, analytic
approach, Dershowitz and Lindenstrauss provide an average-time analysis of
inference parallelisation in logic
programming~\cite{DBLP:conf/iclp/DershowitzL89}.  More recently, Bendkowski,
Grygiel and Zaionc analyse quantitative aspects of normal-order reduction of
combinatory logic terms and estimate the asymptotic density of normalising
combinators~\cite{bengryzai2017,BENDKOWSKI_2017}. Alas, due to the intractable,
epitheoretic formalisation of substitution in untyped \lcalculus, its
quantitative rewriting aspects have, to our best knowledge, not yet been
investigated.

In the following paper we offer a combinatorial perspective on substitution
resolution in \lcalculus~and propose a combinatorial analysis of explicit
substitutions in \lucalculus~\cite{Lescanne1994}. The paper is structured as
follows. In~\autoref{sec:preliminaries} we draft the basic characteristics of
$`l`y$~required for the reminder of the current paper. Next, we introduce the
necessary analytic toolbox used in the quantitative analysis,
see~\autoref{sec:analytic:toolbox}. In~\autoref{sec:counting:luterms} we
enumerate \luterms~and exhibit the declared correspondence between their
corresponding counting sequence and Catalan numbers. Some statistical properties
of random \luterms~are investigated in~\autoref{sec:statistical:properties}.
In~\autoref{subsec:strict:substitution:forms} we relate the typical form of
\luterms~with the classic, epitheoretic substitution tactic of \lcalculus. The
quantitative impact of substitution suspension is investigated
in~\autoref{subsec:suspended:substitutions}. In~\autoref{subsec:redexes} we
discuss the contribution of various substitution resolution primitives.  The
final~\autoref{sec:conclusions} concludes the paper.

\section{Preliminaries}\label{sec:preliminaries}

\subsection{$`l`y$-calculus}
In the current subsection we outline the main characteristics of
\lucalculus{} (lambda-upsilon calculus) required for the reminder of the paper. We refer the curious reader
to~\cite{Lescanne1994,benaissa_briaud_lescanne_rouyer-degli_1996} for a more
detailed exposition.

\begin{remark}
    We choose to outline \lucalculus~following the presentation
    of~\cite{lescanne96} where indices start with $\idx{0}$ instead
    of~\cite{Lescanne1994,benaissa_briaud_lescanne_rouyer-degli_1996} where
    de~Bruijn indices start with $\idx{1}$, as introduced by de~Bruijn himself,
    cf.~\cite{deBruijn1972}. Although both conventions are assumed in the context
    of static, quantitative aspects of \lcalculus, the former convention seems
    to be the most recent standard,
    cf.~\cite{gryles2013,gryles2015,BendkowskiGLZ16,GittenbergerGolebiewskiG16}.
\end{remark}

The computational mechanism of $`b$\nobreakdash-reduction is usually defined as
${(`l x. a) b \to_{`b} a[x := b]}$ where the right-hand side ${a[x := b]}$
denotes the epitheoretic, capture-avoiding substitution of term $b$ for variable
$x$ in $a$. \lucalculus~\cite{Lescanne1994} is a simple, first\nobreakdash-order
rewriting system internalising substitution resolution of classic \lcalculus~in
de~Bruijn notation~\cite{deBruijn1972}. Its expressions, called
\emph{($`l`y$)\nobreakdash-terms}, consist of indices $\idx{0},\idx{1},\ldots$
(for convenience also referred to as \emph{variables}), abstractions and term
application. Terms are also equipped with a new \emph{closure} operator $[s]$
denoting the ongoing resolution of substitution $s$. Explicit substitutions are
fragmented into atomic primitives of \emph{shift}, denoted as $\uparrow$, a
unary operator \emph{lift}, written as $\Uparrow$, mapping substitutions onto
substitutions, and finally a unary \emph{slash} operator, denoted as $/$,
mapping terms onto substitutions. Terms containing closures are called
\emph{impure} whereas terms without them are said to be \emph{pure}. De Bruijn
indices are encoded using a unary base expansion. In other words, $\idx{n}$ is
represented as an $n$\nobreakdash-fold application of the successor operator
$\succ$ to zero.~\autoref{fig:lambda:ups:specification} summarises the
specification of \luterms.

The rewriting rules of \lucalculus~consist of the usual $`b$\nobreakdash-rule,
specified in this framework as ${(`l a) b \to a[b/]}$ and an additional set of
(seven) rules governing the resolution of explicit substitutions,
see~\autoref{fig:lambda:ups:rewriting:system}. Remarkably, these few rewriting
rules are sufficient to correctly model $`b$\nobreakdash-reduction and also
preserve strong normalisation of closed \lterms,
see~\cite{benaissa_briaud_lescanne_rouyer-degli_1996}.  The simple syntax and
rewriting rules of \lucalculus~are not only of theoretical importance, but also
of practical interest, used as the foundation of various reduction engines.  Let us
mention that \lucalculus~and its abstract U\nobreakdash-machine executing
(strong) $`b$\nobreakdash-normalisation was successfully used as the main
reduction engine in Pollack's implementation of LEGO, a proof checker of the
Calculus of Constructions, the Edinburgh Logical Framework, and also for the
Extended Calculus of Constructions~\cite{randypollackLEGO}.

\begin{example}
    Consider the term $S = `lx.`ly.`lz.x z (y z)$. Note that in the de~Bruijn
    notation, $S$ is written as $`l `l `l \idx{2} \idx{0} (\idx{1} \idx{0})$.
    Likewise, the term $K = `lx. `ly. x$ is denoted as $`l `l \idx{1}$.
    Certainly, $K a b \to_{`b}^{+} a$ for each term $a$. Note however, that with explicit
    substitution resolution in $`l`y$, this reduction is fragmented into several
    reduction steps, as follows:
    \begin{align}\label{eq:example:reduction}
        \begin{split}
            (`l`l \idx{1}) a &\to (`l \idx{1})[a/]\\
            &\to `l (\idx{1}[\Uparrow(a/)]) \\
            &\to `l \left(\idx{0}[a/][\uparrow]\right)\\
            &\to `l \left(a[\uparrow]\right).
        \end{split}
    \end{align}
    Note that in the final reduction step of~\eqref{eq:example:reduction} we
    obtain $a[\uparrow]$. The additional shift operator guarantees that
    (potential) free indices are aptly incremented so to avoid potential variable
    captures. If $a$ is closed, i.e.~each variable in $a$ is bound, then
    $a[\uparrow]$ resolves simply to $a$, as intended.
\end{example}

\begin{figure}[hbt!]
\centering
\begin{subfigure}{.45\textwidth}
    \begin{align}
    (`l a) b &\to a [b/] \tag{Beta}\label{red:beta}\\
    (a b)[s] &\to a [s] (b [s]) \tag{App}\label{red:app}\\
    (`l a)[s] &\to `l(a[\Uparrow (s)]) \tag{Lambda}\label{red:lambda}\\
    \idx{0} [a/] &\to a \tag{FVar}\label{red:fvar}\\
    (\succ \idx{n}) [a/] &\to \idx{n} \tag{RVar}\label{red:rvar}\\
    \idx{0} [\Uparrow(s)] &\to \idx{0} \tag{FVarLift}\label{red:fvarlift}\\
    (\succ \idx{n}) [\Uparrow(s)] &\to \idx{n}[s][\uparrow] \tag{RVarLift}\label{red:rvarlift}\\
    \idx{n}[\uparrow] &\to \succ \idx{n} \tag{VarShift}\label{red:varshift}.
\end{align}
    \caption{Rewriting rules.}
    \label{fig:lambda:ups:rewriting:system}
\end{subfigure}
\begin{subfigure}{.45\textwidth}
    \begin{align}\label{eq:luterms:spec}
    \begin{split}
    \cal{T} &::= \cal{N} ~|~ `l \cal{T} ~|~ \cal{T} \cal{T} ~|~ \cal{T} [\cal{S}]\\
    \cal{S} &::= \cal{T}/ ~|~ \Uparrow(\cal{S}) ~|~ \uparrow\\
    \cal{N} &::= \idx{0} ~|~ \succ \cal{N}.
    \end{split}
\end{align}
    \caption{Terms of \lucalculus. Note that de~Bruijn indices are encoded in unary base
    using a successor operator $\succ$.}
\label{fig:lambda:ups:specification}
\end{subfigure}%
\caption{The $`l`y$-calculus rewriting system.}
\end{figure}

\section{Analytic toolbox}\label{sec:analytic:toolbox}
We base our quantitative analysis of \luterms~on techniques borrowed from
analytic combinatorics, in particular singularity analysis developed by Flajolet
and Odlyzko~\cite{FlajoletOdlyzko1990}. We refer the unfamiliar reader
to~\cite{Wilf2006,flajolet09} for a thorough introduction to (multivariate)
generating functions and analytic combinatorics.

\begin{remark}
Our arguments follow standard applications of singularity analysis to
    (multivariate) systems of generating functions corresponding to algebraic
    structures.  For the reader's convenience we offer a high-level, though
    limited to the subject of our interest, outline of this process in the
    following section.
\end{remark}

\subsection{Singularity analysis}
Interested in the quantitative properties of \luterms, for instance their
asymptotic enumeration or parameter analysis, we typically take the following
general approach.  We start the analysis with establishing a formal, unambiguous
context-free specification describing the structures of our interest. Next,
using symbolic methods~\cite[Part A.  Symbolic Methods]{flajolet09} we convert
the specification into a system of generating functions, i.e.~formal power
series $F(z) = \sum_{n \geq 0} a_n z^n$ in which the coefficient $a_n$ standing
by $z^n$, written as $[z^n]F(z)$, denotes the number of structures (objects) of
size $n$.  Interested in parameter analysis, so obtained generating functions
become bivariate and take the form $F(z,u) = \sum_{n,k \geq 0} a_{n,k} z^n u^k$
where $a_{n,k}$, also written as $[z^n u^k]F(z,u)$, stands for the number of
structures of size $n$ for which the investigated parameter takes value $k$; for
instance, $a_{n,k}$ denotes the number of terms of size $n$ with exactly
$k$ occurrences of a specific redex pattern. In this context, variable $z$
corresponds to the size of specified structures whereas $u$ is said to
\emph{mark} the investigated parameter quantities.

When the obtained system of generating functions admits an analytic solution
(i.e.~obtained formal power series are also analytic at the complex plane
origin) we can investigate the quantitative properties of respective coefficient
sequences, and so also enumerated combinatorial structures, by examining the
analytic properties of associated generating functions. The location of their
singularities, in particular so-called dominant singularities dictates the
main, exponential growth rate factor of the investigated coefficient sequence.

\begin{theorem}[Exponential growth formula~{\cite[Theorem
    IV.7]{flajolet09}}]\label{th:exponential-growth-formula}
  If $A(z)$ is analytic at the origin and $R$ is the modulus of a singularity
    nearest to the origin in the sense that
    \begin{equation}\label{eq:exp-growth-radius-of-convergence} R = \sup \{ r
        \geq 0 ~:~ A(z) \text{ is analytic in } |z| < r \}\, ,
\end{equation}
  then the coefficient $a_n = [z^n] A(z)$ satisfies
    \begin{equation}\label{eq:exp-growth-coeff} a_n = R^{-n} \theta(n) \quad
    \text{with} \quad \limsup |\theta(n)|^{\frac{1}{n}} = 1\, .\end{equation}
\end{theorem}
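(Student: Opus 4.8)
The plan is to reduce the statement to the classical Cauchy--Hadamard theorem, which relates the radius of convergence of a power series to the root asymptotics of its coefficients; the exponential growth formula is then essentially a renormalisation of that result.

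First I would observe that the quantity $R$ defined in~\eqref{eq:exp-growth-radius-of-convergence} coincides with the radius of convergence $\rho$ of the Taylor expansion $\sum_{n \geq 0} a_n z^n$ of $A(z)$ at the origin. On the one hand, within every disk $|z| < r$ on which $A$ is analytic, Cauchy's integral formula guarantees that $A$ is represented there by a convergent power series, whence $\rho \geq R$. On the other hand, the series $\sum_{n} a_n z^n$ defines a function analytic throughout its open disk of convergence $|z| < \rho$, so $R \geq \rho$. Combining both inequalities yields $R = \rho$.

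Next I would invoke the Cauchy--Hadamard formula, which expresses the radius of convergence through the coefficients via
\begin{equation*}
\frac{1}{\rho} = \limsup_{n \to \infty} |a_n|^{\frac{1}{n}}\, ,
\end{equation*}
with the usual conventions when $\rho \in \{0, \infty\}$. If a self-contained argument is wanted, this follows from the root test: the series $\sum_{n} a_n z^n$ converges absolutely whenever $|z| \limsup_n |a_n|^{1/n} < 1$ and diverges whenever $|z| \limsup_n |a_n|^{1/n} > 1$. Finally, setting $\theta(n) := a_n R^{n}$ gives $a_n = R^{-n} \theta(n)$ by construction, which is precisely~\eqref{eq:exp-growth-coeff}, while the stated normalisation follows from a direct computation:
\begin{equation*}
\limsup_{n \to \infty} |\theta(n)|^{\frac{1}{n}} = \limsup_{n \to \infty} \bigl( |a_n|^{\frac{1}{n}} R \bigr) = R \cdot \limsup_{n \to \infty} |a_n|^{\frac{1}{n}} = \frac{R}{\rho} = 1\, ,
\end{equation*}
using $R = \rho$ from the first step.

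The main obstacle, such as it is, lies entirely in the identification $R = \rho$, since it rests on the fundamental fact that a function analytic on a disk is globally represented there by a single convergent Taylor series, equivalently that analyticity and local power-series expandability coincide. Once this is in place, the remainder is the standard root-test argument together with the elementary limit manipulation above.
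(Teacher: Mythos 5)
Your argument is correct: identifying $R$ with the radius of convergence of the Taylor series at the origin and then applying the Cauchy--Hadamard formula to the rescaled coefficients $\theta(n) = a_n R^n$ is exactly the standard proof of this result. The paper itself imports the theorem from Flajolet and Sedgewick without proof, and their argument is essentially the same two-step reduction you describe, so there is nothing to add beyond noting that the step $R \geq \rho$ tacitly uses that the sum of the series analytically continues $A$ to the full disk of convergence, which is the convention in force here.
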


Generating functions considered in the current paper are algebraic, i.e.~are
branches of polynomial equations in form of $P(z,F(z)) = 0$. Since $\sqrt{z}$
cannot be unambiguously defined as an analytic functions near the origin, the
main source of singularities encountered during our analysis are roots of
radicand expressions involved in the closed-form, analytic formulae defining
studied generating functions. The following classic result due to Pringsheim
facilities the inspection of such singularities.

\begin{theorem}[Pringsheim~{\cite[Theorem~IV.6]{flajolet09}}]\label{th:pringsheim}
  If $A(z)$ is representable at the origin by a series expansion that has
    non-negative coefficients and radius of convergence $R$, then the point $z =
    R$ is a singularity of $A(z)$.
\end{theorem}

A detailed singularity analysis of algebraic generating functions, involving an
examination of the type of dominant singularities follows as a consequence of
the Puiseux series expansion for algebraic generating functions.

\begin{theorem}[Newton, Puiseux~{\cite[Theorem~VII.7]{flajolet09}}]\label{th:newton-puiseux}
  Let $F(z)$ be a branch of an algebraic function $P(z, F(z)) = 0$. Then in a
    circular neighbourhood of a singularity $\rho$ slit along a ray emanating
    from $\rho$, $F(z)$ admits a fractional Newton-Puiseux series expansion that
    is locally convergent and of the form
  \begin{equation}
      F(z) = \sum_{k \geq k_0} c_k {\left( z - \rho \right)}^{k/\kappa}\, ,
  \end{equation}
  where $k_0 \in \mathbb{Z}$ and $\kappa \geq 1$.
\end{theorem}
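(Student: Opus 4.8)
The plan is to establish the expansion through an analytic continuation (monodromy) argument, which simultaneously yields both the fractional-power form and the local convergence. First I would reduce to the case $\rho = 0$ by the translation $z \mapsto z + \rho$, so that we study $F$ near the origin. Since $F$ is a branch of the algebraic function defined by $P(z,w) = 0$, on a sufficiently small punctured disk $0 < |z| < r$ (slit along a ray) the equation has exactly $d = \deg_w P$ distinct analytic solutions away from the finitely many points where the discriminant $\operatorname{disc}_w P$ or the leading coefficient of $P$ in $w$ vanishes; by the analytic implicit function theorem each such solution is locally analytic. Shrinking $r$, we may assume $F$ is one of these branches and that no such exceptional point, other than possibly $0$ itself, lies in the disk.

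Next I would analyse the effect of winding once around the origin. Analytic continuation of $F$ along a small positively oriented loop encircling $0$ carries it to another branch, since the finite set of branches is permuted by the monodromy; let $\kappa \geq 1$ be the length of the cycle containing $F$, i.e.~the least $\kappa$ for which $\kappa$ successive loops return $F$ to itself. Introducing the uniformising coordinate $t$ with $z = t^{\kappa}$, set $G(t) := F(t^{\kappa})$. As $t$ traverses a small circle once, $z$ winds $\kappa$ times around $0$, so by the choice of $\kappa$ the function $G$ returns to its initial value; hence $G$ is single-valued and analytic in a punctured disk about $t = 0$.

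It then remains to control the singularity of $G$ at $t = 0$. Because $F$ satisfies $P(z, F) = 0$, the Cauchy bound on the roots of a polynomial in terms of its coefficients shows that $|F(z)|$ is dominated by a fixed power of $1/|z|$ as $z \to 0$; consequently $|G(t)|$ is dominated by a power of $1/|t|$, so $t = 0$ is at worst a pole and $G$ extends meromorphically across it. Thus $G$ admits a convergent Laurent expansion $G(t) = \sum_{k \geq k_0} c_k\, t^{k}$ with $k_0 \in \mathbb{Z}$, and substituting $t = (z - \rho)^{1/\kappa}$ recovers the claimed Puiseux series, convergent in a slit neighbourhood of $\rho$. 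The main obstacle is precisely this growth estimate: one must rule out an essential singularity and instead bound $|F|$ by a fractional power of $|z - \rho|$, which is exactly where the algebraicity of $F$—as opposed to its mere analyticity on the punctured disk—is indispensable. A constructive alternative yielding the same conclusion is the Newton-polygon method: the slopes of the lower convex hull of the support of $P$ determine the admissible leading exponents $k_0/\kappa$, and iterating the substitution $w = (z - \rho)^{k_0/\kappa}\bigl(c + \tilde{w}\bigr)$, with $c$ a root of the associated edge polynomial, builds the series term by term, convergence being recovered afterwards by a majorant-series estimate.
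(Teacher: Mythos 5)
This is a background result that the paper imports verbatim from Flajolet and Sedgewick~\cite[Theorem~VII.7]{flajolet09} without giving any proof, so there is no internal argument to compare against; your monodromy proof---locally analytic branches away from the zeros of $\operatorname{disc}_w P$ and of the leading coefficient, uniformisation $z-\rho = t^{\kappa}$ with $\kappa$ the length of the monodromy cycle containing $F$, and the Cauchy root bound to exclude an essential singularity of $G$ at $t=0$---is the standard argument (essentially the one sketched in the cited reference) and is correct. The only step worth making explicit is that the discriminant argument presupposes $P$ squarefree in $w$ (otherwise $\operatorname{disc}_w P \equiv 0$); one reduces to this case by replacing $P$ with its squarefree part, of which $F$ is still a root.
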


With available Puiseux series, the complete asymptotic expansion of sub-exponential
growth rate factors associated with coefficient sequences of investigated
algebraic generating functions can be accessed using the following standard
function scale.

\begin{theorem}[Standard function scale~{\cite[Theorem~VI.1]{flajolet09}}]\label{th:standard-func-scale}
  Let $\alpha \in \mathbb{C} \setminus \mathbb{Z}_{\leq 0}$. Then, $f(z) = {(1 -
    z)}^{-\alpha}$ admits for large $n$ a complete asymptotic expansion in form
    of
  \begin{equation} [z^n]f(z) = \frac{n^{\alpha-1}}{\Gamma(\alpha)} \left( 1 +
      \frac{\alpha(\alpha-1)}{2n} +
      \frac{\alpha(\alpha-1)(\alpha-2)(3\alpha-1)}{24n^2} + O\left(\frac{1}{n^3}\right) \right)
  \end{equation}
  where $\Gamma \colon \mathbb{C} \setminus \mathbb{Z}_{\leq 0} \to \mathbb{C}$
    is the Euler Gamma function defined as
  \begin{equation}
  \Gamma(z) = \int_{0}^{\infty} x^{z-1} e^{-x} dx.
  \end{equation}
\end{theorem}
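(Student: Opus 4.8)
The plan is to bypass any contour integration and instead extract the coefficient exactly, then feed it into Stirling's asymptotic series. First I would invoke the generalised binomial theorem to write, for $\alpha \in \mathbb{C} \setminus \mathbb{Z}_{\leq 0}$,
\begin{equation}
    [z^n](1-z)^{-\alpha} = (-1)^n \binom{-\alpha}{n} = \binom{n + \alpha - 1}{n} = \frac{\Gamma(n+\alpha)}{\Gamma(\alpha)\,\Gamma(n+1)}\, .
\end{equation}
The hypothesis $\alpha \notin \mathbb{Z}_{\leq 0}$ is exactly what guarantees that $\Gamma(\alpha)$ is finite and non-zero, so the right-hand side is well defined; note that for $\alpha \in \mathbb{Z}_{\leq 0}$ the function $(1-z)^{-\alpha}$ degenerates to a polynomial and is excluded. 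This reduces the entire problem to determining the large-$n$ behaviour of the ratio $\Gamma(n+\alpha)/\Gamma(n+1)$, the factor $1/\Gamma(\alpha)$ being a constant that reproduces the prefactor in the claimed formula.

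Next I would apply Stirling's expansion in its logarithmic form,
\begin{equation}
    \log \Gamma(s) = \left(s - \tfrac{1}{2}\right)\log s - s + \tfrac{1}{2}\log(2\pi) + \sum_{k \geq 1} \frac{B_{2k}}{2k(2k-1)\, s^{2k-1}}\, ,
\end{equation}
to both $s = n+\alpha$ and $s = n+1$, where the $B_{2k}$ denote the Bernoulli numbers. Writing
\begin{equation}
    \frac{\Gamma(n+\alpha)}{\Gamma(n+1)} = n^{\alpha-1} \exp\!\left( \log\Gamma(n+\alpha) - \log\Gamma(n+1) - (\alpha-1)\log n \right)\, ,
\end{equation}
I would expand each occurrence of $\log(n+a) = \log n + a/n - a^2/(2n^2) + \cdots$ for $a \in \{\alpha, 1\}$ and collect the resulting contributions as a power series in $1/n$. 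Since Stirling's series is a genuine asymptotic expansion valid to all orders, this manipulation is legitimate term by term and produces a complete asymptotic expansion of the exponent, hence of its exponential.

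Finally I would exponentiate the expanded exponent and multiply by $n^{\alpha-1}/\Gamma(\alpha)$, reading off the coefficients of $1/n$ and $1/n^2$ to match the stated values $\alpha(\alpha-1)/2$ and $\alpha(\alpha-1)(\alpha-2)(3\alpha-1)/24$, with the tail absorbed into $O(1/n^3)$. The main obstacle is purely one of bookkeeping: the $(s-\tfrac{1}{2})\log s$ terms, the Bernoulli corrections $B_2/(12 s)$, and the Taylor expansions of the logarithms all contribute at orders $1/n$ and $1/n^2$, so obtaining the precise rational functions of $\alpha$ requires carefully tracking every cross term before exponentiating. That the expansion is \emph{complete}, i.e.\ continues to arbitrary order with an $O(n^{-m})$ remainder for every $m$, follows from the same computation carried to higher order, since no step introduces an irreducible error beyond the truncation of Stirling's series itself.
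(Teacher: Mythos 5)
Your argument is correct, and it is worth noting that the paper itself contains no proof of this statement: it is quoted as background from Flajolet and Sedgewick (Theorem~VI.1), so the only meaningful comparison is with the proof in that source, which proceeds via Cauchy's coefficient formula over a Hankel contour looping around $z=1$ together with Hankel's integral representation of $1/\Gamma$. Your route is the more elementary of the two standard ones: the generalised binomial theorem gives the coefficient exactly as $\Gamma(n+\alpha)/\bigl(\Gamma(\alpha)\Gamma(n+1)\bigr)$, and Stirling's series then yields the complete expansion of the ratio $\Gamma(n+\alpha)/\Gamma(n+1)\sim n^{\alpha-1}\bigl(1+\tfrac{\alpha(\alpha-1)}{2n}+\cdots\bigr)$; the first-order coefficient is the classical $(a-b)(a+b-1)/2$ with $a=\alpha$, $b=1$, and the second-order coefficient does come out as stated after the bookkeeping you describe. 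Two minor points of care: the Stirling sum over $k$ is divergent and must be read as an asymptotic expansion truncated at any finite order with a remainder of the order of the first omitted term (which you acknowledge), and exponentiating the truncated exponent is legitimate precisely because that exponent is $O(1/n)$. What the Hankel-contour proof buys, and your computation does not, is that it extends to functions that are only \emph{asymptotically} of the form $(1-z)^{-\alpha}$ near the singularity --- the transfer setting the paper actually relies on downstream --- whereas your argument is tied to having the coefficients in closed form. For the theorem exactly as stated, your proof is sound.
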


\subsection{Parameter analysis}
Consider a random variable $X_n$ denoting a certain parameter quantity of a
(uniformly) random \luterm~of size $n$. In order to analyse the limit behaviour
of $X_n$ as $n$ tends to infinity, we utilise the moment techniques of
multivariate generating functions~\cite[Chapter 3]{flajolet09}. In particular,
if $F(z,u)$ is a bivariate generating function associated with $X_n$ where $u$
marks the considered parameter quantities, then the expectation
$\mathbb{E}(X_n)$ takes the form
\begin{equation}
    \mathbb{E}(X_n) = \dfrac{[z^n]\deriv{u}{F(z,u)}\at{u=1}}{[z^n] F(z,1)}.
\end{equation}

Consequently, the limit mean and, similarly, all higher moments can be accessed using
techniques of singularity analysis. Although such a direct approach allows to
investigate all the limit moments of $X_n$ (in particular its mean and variance) it
is usually more convenient to study the associated probability generating
function $p_n(u)$ instead, defined as
\begin{equation}
    p_n(u) = \sum_{k \geq 0} \mathbb{P}(X_n = k) u^k =
    \dfrac{[z^n]F(z,u)}{[z^n]F(z,1)}.
\end{equation}

With $p_n(u)$ at hand, it is possible to readily access the limit distribution
of $X_n$.  In the current paper we focus primarily on continuous, Gaussian limit
distributions associated with various redexes in \lucalculus. The following
Quasi-powers theorem due to Hwang~\cite{hwang1998convergence} provides means to
obtain a limit Gaussian distribution and establishes the rate at which
intermediate distributions converge to the final limit distribution.

\begin{theorem}[Quasi-powers theorem, see~{\cite[Theorem IX.8]{flajolet09}}]
\label{th:quasi:powers:theorem}
Let \( (X_n)_{n=1}^\infty \) be a sequence of non-negative discrete random
    variables (supported by $\mathbb{Z}_{\geq 0}$) with probability generating
    functions \( p_n(u) \). Assume that, uniformly in a fixed complex
    neighbourhood of $u = 1$, for sequences $\beta_n, \kappa_n \to \infty$,
    there holds
 \begin{equation}
 p_n(u) = A(u) \cdot {B(u)}^{\beta_n} \left(1 + O\left(\dfrac{1}{\kappa_n}\right) \right)
 \end{equation}
 where $A(u)$ and $B(u)$ are analytic at $u=1$ and $A(1) = B(1) = 1$.

 Assume finally that $B(u)$ satisfies the following \emph{variability
    condition}:
    \begin{equation}\label{eq:quasi:powers:variability}
 B''(1) + B'(1) - {B'(1)}^2 \neq 0.
 \end{equation}
 Then, the distribution $X_n$ is, after standardisation, asymptotically Gaussian
    with speed of convergence of order
    $O\left(\dfrac{1}{\kappa_n} + \dfrac{1}{\sqrt{\beta_n}}\right)$:
 \begin{equation}
 \mathbb{P}\left(\dfrac{X_n - \mathbb{E}(X_n)}{\sqrt{\mathbb{V}(X_n)}} \leq
     x\right) = \Phi(x) + O\left(\dfrac{1}{\kappa_n} +
     \dfrac{1}{\sqrt{\beta_n}}\right)
 \end{equation}
 where $\Phi(x)$ is the standard normal distribution function
 \begin{equation}
 \Phi(x) = \dfrac{1}{\sqrt{2 \pi}} \int_{-\infty}^x e^{-\omega^2/2} d\omega\, .
 \end{equation}
The limit expectation and variance satisfy
\begin{align}\label{eq:quasi:powers:moments}
    \begin{split}
        \mathbb E(X_n) &\sim B'(1) n\\
        \mathbb V(X_n) &\sim \left(B''(1) + B'(1) - {B'(1)}^2\right) n
    \end{split}
\end{align}
\end{theorem}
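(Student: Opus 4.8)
The plan is to recognise this as a quantitative central limit theorem and to establish it by the method of moment (equivalently characteristic) generating functions, following Hwang's argument as presented in \cite[Theorem IX.8]{flajolet09}. First I would pass to the cumulant generating function by setting $u = e^{t}$, so that $M_n(t) := p_n(e^t) = \mathbb{E}\bigl(e^{t X_n}\bigr)$ and the quasi-power hypothesis reads
\[
  \log M_n(t) = \log A(e^t) + \beta_n \log B(e^t) + O\!\bigl(\tfrac{1}{\kappa_n}\bigr),
\]
uniformly for $t$ in a fixed complex neighbourhood of $0$. Because $A(1) = B(1) = 1$, both $\log A(e^t)$ and $\log B(e^t)$ are analytic and vanish at $t = 0$, so a Taylor expansion at the origin reads off the cumulants of $X_n$. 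A direct computation gives $\tfrac{d}{dt}\log B(e^t)\big|_{t=0} = B'(1)$ and $\tfrac{d^2}{dt^2}\log B(e^t)\big|_{t=0} = B''(1) + B'(1) - {B'(1)}^2$, whence $\mathbb{E}(X_n) \sim B'(1)\,\beta_n$ and $\mathbb{V}(X_n) \sim \bigl(B''(1) + B'(1) - {B'(1)}^2\bigr)\beta_n$, matching \eqref{eq:quasi:powers:moments} once $\beta_n \sim n$. The variability condition \eqref{eq:quasi:powers:variability} is exactly the requirement that this leading variance coefficient be non-zero, so that the limit law is non-degenerate.

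Next I would standardise. Write $\mu_n = \mathbb{E}(X_n)$ and $\sigma_n = \sqrt{\mathbb{V}(X_n)}$, and consider the characteristic function of the centred and scaled variable, $\varphi_n(t) = e^{-it\mu_n/\sigma_n}\,M_n\!\bigl(it/\sigma_n\bigr)$. Substituting $t \mapsto it/\sigma_n$ into the expansion above, the centring factor cancels the linear cumulant up to an $O(1/\sigma_n)$ term, the quadratic cumulant tends to $-t^2/2$ since $\sigma_n^2 \sim \bigl(B''(1)+B'(1)-{B'(1)}^2\bigr)\beta_n$, and the cubic and higher terms are $O(\beta_n/\sigma_n^3) = O(1/\sqrt{\beta_n})$; the $\log A$ contribution is $O(1/\sigma_n) = O(1/\sqrt{\beta_n})$ and the multiplicative error is $O(1/\kappa_n)$. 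Collecting these, $\log \varphi_n(t) = -\tfrac{t^2}{2} + O\!\bigl(1/\kappa_n + 1/\sqrt{\beta_n}\bigr)$, so that $\varphi_n(t) \to e^{-t^2/2}$ with an explicit rate, uniformly on bounded $t$-intervals.

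Finally, to promote pointwise convergence of characteristic functions into convergence of distribution functions with the claimed speed, I would apply a Berry--Esseen smoothing inequality, which bounds the Kolmogorov distance $\sup_x \bigl|\mathbb{P}\bigl( (X_n - \mu_n)/\sigma_n \le x \bigr) - \Phi(x)\bigr|$ by $\tfrac{1}{\pi}\int_{-T}^{T}\bigl|(\varphi_n(t) - e^{-t^2/2})/t\bigr|\,dt + O(1/T)$ and then optimises the truncation parameter $T$ against the estimate of the previous paragraph; this yields the stated error $O\!\bigl(1/\kappa_n + 1/\sqrt{\beta_n}\bigr)$. I expect the main obstacle to be \emph{uniformity} rather than any single computation: one must keep $e^{it/\sigma_n}$ inside the fixed neighbourhood in which the quasi-power representation and its error term are valid while letting $t$ range over a growing interval $[-T, T]$, and control the Taylor remainder of $\beta_n \log B(e^{it/\sigma_n})$ uniformly there. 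Since the statement is a verbatim restatement of \cite[Theorem IX.8]{flajolet09}, in the paper I would simply invoke that reference; the sketch above records the structure I would reconstruct if a self-contained proof were required.
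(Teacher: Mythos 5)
Your proposal is correct and matches what the paper does: the paper states this result as a known theorem of Hwang, cited from Flajolet--Sedgewick, and offers no proof of its own, which is exactly the course you take. Your supplementary sketch accurately reproduces the canonical argument (cumulant expansion of $\log B(e^t)$ giving the stated mean and variance coefficients, standardisation of the characteristic function, and a Berry--Esseen smoothing step), and you rightly note that the moment asymptotics in the paper's statement implicitly assume $\beta_n = n$.
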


\section{Counting $`l`y$-terms}\label{sec:counting:luterms}
In current section we begin the enumeration of \luterms.  For that purpose, we
impose on them a \emph{size notion} such that the size of a \luterm, denoted as
$|\cdot|$,~is equal to the total number of constructors (in the associated term
algebra, see~\autoref{fig:lambda:ups:specification}) of which it is
built.~\autoref{fig:size:notion} provides the recursive definition of term size.

\begin{figure}[hbt!]
    \centering
\noindent\begin{minipage}{.3\linewidth}
\begin{align*}
    |\idx{n}| &= n + 1\\
    |`l a| &= 1 + |a|\\
    |a b| &= 1 + |a| + |b|\\
    |a[s]| &= 1 + |a| + |s|
\end{align*}
\end{minipage}%
\begin{minipage}{.3\linewidth}
\begin{align*}
    |a/| &= 1 + |a|\\
    |\Uparrow(s)| &= 1 + |s|\\
    |\uparrow| &= 1.
\end{align*}
\end{minipage}%
    \caption{Natural size notion for \luterms.}
    \label{fig:size:notion}
\end{figure}

\begin{remark}
Such a size notion, in which all building constructors contribute equal weight
    one to the overall term size was introduced in~\cite{Bendkowski2016} as the
    so-called \emph{natural size notion}. Likewise, we also refer to the size notion
    assumed in the current paper as natural.

    Certainly, our choice is arbitrary and, in principle, different size
    measures can be assumed,
    cf.~\cite{gryles2015,Bendkowski2016,GittenbergerGolebiewskiG16}.  For
    convenience, we choose the natural size notion thus avoiding the obfuscating
    (though still manageable) technical difficulties arising in the analysis of
    general size model frameworks, see e.g.~\cite{GittenbergerGolebiewskiG16}.
    Moreover, our particular choice exhibits unexpected consequences and hence
    is, arguably, interesting on its own, see~\autoref{prop:luterms:catalan}.
\end{remark}

Equipped with a size notion ensuring that for each $n \geq 0$ the total number
of \luterms~of size $n$ is finite, we can proceed with our enumerative analysis.
Surprisingly, the counting sequence corresponding to \luterms~in the natural
size notion corresponds also to the celebrated sequence of Catalan
numbers\protect\footnote{see~\url{https://oeis.org/A000108}.}.

\begin{prop}
    \protect\label{prop:luterms:catalan}
    Let $T(z)$ and $S(z)$ denote the generating functions corresponding to
    \luterms\\ and substitutions, respectively. Then,
    \begin{equation}\label{eq:plain:terms:genfun:final}
        T(z) = \dfrac{1-\sqrt{1-4z}}{2z} - 1 \quad \text{whereas} \quad
        S(z) = \frac{1-\sqrt{1-4z}}{2z} \left(\frac{z}{1-z}\right).
    \end{equation}
    In consequence
    \begin{equation}\label{eq:plain:terms:coeffs:final}
        [z^n]T(z) = \begin{cases}
            0, & \text{for } n = 0\\
            \dfrac{1}{n+1} \displaystyle\binom{2n}{n}, & \text{otherwise}
        \end{cases} \quad \text{and} \quad
        [z^n]S(z) = \begin{cases}
            0, & \text{for } n = 0\\
            \displaystyle\sum_{k=0}^{n-1} \dfrac{1}{k+1} \displaystyle\binom{2k}{k} &
            \text{otherwise}.
        \end{cases}
    \end{equation}
    hence also
    \begin{equation}
        [z^n]T(z) \sim \dfrac{4^n}{\sqrt{\pi} n^{3/2}} \quad \text{whereas}
        \quad [z^n]S(z) \sim \dfrac{4^{n+1}}{3 \sqrt{\pi} n^{3/2}}.
    \end{equation}
\end{prop}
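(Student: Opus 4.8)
The plan is to apply the symbolic method to the context-free specification in \autoref{fig:lambda:ups:specification}, translating it directly into a system of generating functions and then solving that system explicitly. First I would handle the indices: since $\idx{n} = \succ^n \idx{0}$ has size $n+1$, the sub-grammar $\cal{N}$ contributes $N(z) = \sum_{n \geq 0} z^{n+1} = \frac{z}{1-z}$. Reading off the four productions for $\cal{T}$ (an index, an abstraction $`l a$, an application $ab$, or a closure $a[s]$, each non-index constructor adding one to the size) and the three productions for $\cal{S}$ yields the system
\begin{equation*}
  T(z) = \frac{z}{1-z} + z\,T(z) + z\,T(z)^2 + z\,T(z)S(z), \qquad S(z) = z\,T(z) + z\,S(z) + z .
\end{equation*}

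The second equation is linear in $S$, so I would immediately solve it as $S(z) = \frac{z\left(T(z)+1\right)}{1-z}$ and substitute back into the first. After clearing denominators the quadratic-in-$T$ terms carrying $z^2$ cancel pairwise, leaving the quadratic $z\,T(z)^2 + (2z-1)\,T(z) + z = 0$, whose discriminant is $(2z-1)^2 - 4z^2 = 1-4z$. Solving and selecting the branch that is analytic at the origin with $T(0)=0$ (the $-\sqrt{1-4z}$ branch; the $+$ branch has a pole at $z=0$) gives $T(z) = \frac{(1-2z)-\sqrt{1-4z}}{2z} = \frac{1-\sqrt{1-4z}}{2z}-1$, and back-substitution produces the claimed closed form for $S(z)$.

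The key observation for coefficient extraction is that $T(z)+1 = \frac{1-\sqrt{1-4z}}{2z}$ is exactly the Catalan generating function $C(z) = \sum_{n\geq 0} C_n z^n$ with $C_n = \frac{1}{n+1}\binom{2n}{n}$; hence $[z^n]T(z) = C_n$ for $n \geq 1$ and $[z^0]T(z) = C_0 - 1 = 0$. Since $S(z) = C(z)\cdot\frac{z}{1-z}$ and $\frac{z}{1-z} = \sum_{k\geq 1} z^k$, the Cauchy product gives $[z^n]S(z) = \sum_{k=0}^{n-1} C_k$, which is the stated partial-sum formula (and $0$ at $n=0$).

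Finally, for the asymptotics I would invoke the singularity-analysis pipeline of \autoref{sec:analytic:toolbox}. Both $T(z)$ and $S(z)$ are algebraic with a unique dominant singularity at $z = \tfrac14$, the root of the radicand $1-4z$ (the extra pole of $\frac{z}{1-z}$ at $z=1$ lies strictly farther out and so does not affect the dominant growth), and \autoref{th:pringsheim} confirms this is genuinely the radius of convergence. The Newton--Puiseux expansion (\autoref{th:newton-puiseux}) exhibits a square-root-type singularity, and transferring the local behaviour via the standard function scale (\autoref{th:standard-func-scale})---equivalently, applying Stirling's formula directly to $C_n$---yields $[z^n]T(z) \sim \frac{4^n}{\sqrt{\pi}\,n^{3/2}}$. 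For $S(z)$ the only new ingredient is the analytic prefactor $\frac{z}{1-z}$ evaluated at the singularity, $\frac{1/4}{1-1/4} = \tfrac13$, which rescales the Catalan singular term; equivalently, the partial sum $\sum_{k=0}^{n-1} C_k$ is dominated by its tail, behaving like a geometric series of ratio $4$ and hence asymptotic to $\tfrac43 C_{n-1}$. I expect the main obstacle to be purely bookkeeping: the initial branch selection, and then correctly isolating the singular part of the product $C(z)\cdot\frac{z}{1-z}$ so as to read off the precise constant in the sub-exponential factor, rather than any conceptual difficulty.
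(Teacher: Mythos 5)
Your proposal is correct and follows essentially the same route as the paper: translate the specification of \autoref{fig:lambda:ups:specification} into the system $T = N + zT + zT^2 + zTS$, $S = zT + zS + z$, $N = \frac{z}{1-z}$, solve it, recognise the Catalan generating function $C(z)$ in $T(z)+1$, and read off coefficients and asymptotics; the only cosmetic difference is that you eliminate $S$ first and select the branch via analyticity and $T(0)=0$, whereas the paper solves the system simultaneously and selects the branch by non-negativity of coefficients.

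One substantive remark: carried to its conclusion, your singularity analysis of $S(z)=C(z)\cdot\frac{z}{1-z}$ gives the prefactor $\frac{1/4}{1-1/4}=\frac13$ and hence $[z^n]S(z)\sim\frac{4^{n}}{3\sqrt{\pi}\,n^{3/2}}$ (equivalently $\frac43 C_{n-1}$), which is correct --- but it is not what the proposition displays, namely $\frac{4^{n+1}}{3\sqrt{\pi}\,n^{3/2}}$. The stated constant is too large by a factor of $4$ (a numerical check of $\sum_{k=0}^{n-1}C_k$ against both candidates confirms this), so your derivation has in fact uncovered an error in the statement rather than a flaw in your argument; you should write out the final constant explicitly and flag the discrepancy instead of leaving the $S$-asymptotics implicit.
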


\begin{proof}
    Consider the formal specification~\eqref{eq:luterms:spec} for \luterms.
    Let $N(z)$ be the generating function corresponding to de~Bruijn indices.
    Note that following symbolic methods, the generating functions $T(z)$, $S(z)$, and
    $N(z)$ give rise to the system
\begin{align}\label{eq:plain:terms:genfun:system}
    \begin{split}
        T(z) &= N(z) + z T(z) + z {T(z)}^2 + z T(z) S(z)\\
        S(z) &= z T(z) + z S(z) + z\\
        N(z) &= z + z N(z).
    \end{split}
\end{align}
Note that $N(z)$ is an independent variable
    in~\eqref{eq:plain:terms:genfun:system}.  We can therefore solve the
    equation $N(z) = z + z N(z)$ and find that ${N(z) = \dfrac{z}{1-z}}$.
    Substituting this expression for $N(z)$ in the equations defining $T(z)$ and
    $S(z)$ we obtain
    \begin{equation}\label{eq:plain:terms:genfun:system:ii}
        T(z) = \frac{z}{1-z} + z T(z) + z {T(z)}^2 + z T(z) S(z)
        \quad \text{whereas} \quad S(z) = z T(z) + z S(z) + z.
    \end{equation}
System~\eqref{eq:plain:terms:genfun:system:ii} admits two solutions, i.e.
    \begin{equation}\label{eq:plain:terms:genfun:system:iii}
        T(z) = \frac{1 \pm \sqrt{1-4 z} - 2z}{2 z} \quad \text{and} \quad
        S(z) = \frac{1 \pm \sqrt{1-4z}}{2 (1-z)},
    \end{equation}
both with agreeing signs.

    In order to determine the correct pair of generating functions we invoke the
    fact that, by their construction, both $[z^n]T(z)$ and $[z^n]S(z)$ are
    non-negative integers for all $n \geq 0$.  Consequently, the declared
    pair~\eqref{eq:plain:terms:genfun:final} is the analytic solution
    of~\eqref{eq:plain:terms:genfun:system:iii}.  At this point, we notice that
    both the generating functions in~\eqref{eq:plain:terms:genfun:final}
    resemble the famous generating function $C(z) = \dfrac{1-\sqrt{1-4z}}{2z}$
    corresponding to Catalan numbers, see e.g.~\cite[Section 2.3]{Wilf2006}.
    Indeed
    \begin{equation}\label{eq:plain:terms:genfun:ii}
        T(z) = \dfrac{1-\sqrt{1-4z}}{2z} - 1 \quad \text{whereas} \quad
        S(z) = \frac{1-\sqrt{1-4z}}{2z} \left(\frac{z}{1-z}\right).
    \end{equation}

    In this form, we can readily relate  Catalan numbers with respective
    coefficients of $T(z)$ and $S(z)$, see~\eqref{eq:plain:terms:coeffs:final}.
    From~\eqref{eq:plain:terms:genfun:ii} we obtain $T(z) = C(z) - 1$.  The
    number $[z^n]T(z)$ corresponds thus to $[z^n]C(z)$ for all $n \geq 1$ with
    the initial $[z^0]T(z) = 0$. Furthermore, given ${S(z) = C(z)
    \dfrac{z}{1-z}}$ we note that $[z^n]S(z)$ corresponds to the partial sum of
    Catalan numbers\footnote{see~\url{https://oeis.org/A014137}.} up to $n$
    (exclusively).
\end{proof}

The correspondence exhibited in~\autoref{prop:luterms:catalan} witnesses the
existence of a bijection between \luterms~of size $n$ and, for instance, plane
binary trees with $n$ inner nodes. In what follows we provide an alternative,
constructive proof of this fact.

\subsection{Bijection between $`l`y$-terms and plane binary trees}
Let $\cal{B}$ denote the set of plane binary trees (i.e.~binary trees in which
we distinguish the order of subtrees). Consider the map $\varphi \colon \cal{B}
\to \cal{T}$ defined as in~\autoref{fig:bijection}.
Note that, for convenience, we omit drawing leafs. Consequently, nodes
in~\autoref{fig:bijection} with a single or no subtrees are to be understood as
having implicit leafs attached to vacuous branches.

\begin{figure}[hbt!]
    \centering
\noindent\begin{minipage}{.45\linewidth}
\begin{displaymath}
\begin{array}{rcl@{\quad}l}
  `v\left(\xym{`(!)\ar@{-}[dr]\\ &R}\right) &=& `l`v(R)\\\\
 `v\left(\xym{&`(!)\ar@{-}[dl]\\ L}\right) &=&
 \succ\idx{n} \\\\ &&\wh `v(L) = \idx{n}\\\\
 `v\left(\xym{&`(!)\ar@{-}[dl]\\ L}\right) &=&
    a[\Uparrow^{n+1}(s)] \\\\ &&\wh `v(L) = a[\Uparrow^{n}(s)] \\\\
\end{array}
\end{displaymath}
\end{minipage}%
\begin{minipage}{.45\linewidth}
\begin{displaymath}
\begin{array}{rcl@{\quad}l}
`v(`(!)) &=& \idx{0}\\\\
`v\left(\xymm{&`(!)\ar@{-}[dl]\\ `(!)\ar@{-}[dr]\\ &R}\right)
&=& `v(R)[\uparrow] \\\\
`v\left(\xymm{&&`(!)\ar@{-}[dl]\\ &`(!)\ar@{-}[dl]\ar@{-}[dr]\\L&&R}\right)
&=& `v(L)[`v(R)/]\\\\
    `v\left(\xym{&`(!)\ar@{-}[dl]\ar@{-}[dr]\\ L&&R}\right) &=& `v(L)\,`v(R)
\end{array}
\end{displaymath}
\end{minipage}%
    \caption{Pictorial representation of the size-preserving bijection $\varphi$ between
    \luterms~and plane binary trees.}
    \label{fig:bijection}
\end{figure}

Given a tree $T$ as input, $\varphi$ translates it to a corresponding
\luterm~$\varphi(T)$ based on the shape of $T$ (performing a so-called pattern matching).
This shape, however, might be determined through a recursive call to $\varphi$,
see the second on third rule of the left-hand size of~\autoref{fig:bijection}.

\begin{prop}\label{prop:luterms:bijection}
    The map $\varphi \colon \cal{B} \to \cal{T}$ is a bijection preserving the
    size of translated structures.  In other words, given a tree $T$ with $n$
    inner nodes $\varphi(T)$ is a \luterm~of size $n$.
\end{prop}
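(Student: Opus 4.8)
The plan is to prove that $\varphi$ is a size-preserving bijection by establishing that it is well-defined and total, that it preserves size, and that it is invertible. I will proceed by structural induction, but the subtle point is that the recursion in \autoref{fig:bijection} is \emph{not} driven purely by the shape of the input tree at its root: the second and third rules on the left-hand side both pattern-match on a node with a single left subtree $L$, and which rule fires depends on the \emph{value} $\varphi(L)$ (whether it is an index $\idx{n}$ or a term of the form $a[\Uparrow^n(s)]$). So the first step is to verify that these two cases are genuinely exhaustive and mutually exclusive, i.e.\ that whenever $\varphi(L)$ is defined it is either of the shape $\idx{n}$ or of the shape $a[\Uparrow^n(s)]$ (with $n \geq 0$, where $\Uparrow^0(s) = s$), and never both. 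This guarantees the map is a well-defined total function on $\cal{B}$.

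Next I would establish size preservation directly from the clauses, by induction on the structure of the tree. For each rule I check that the number of inner nodes on the left equals the size (number of constructors) of the \luterm~on the right, using the size notion of \autoref{fig:size:notion}. For instance, the rule $\varphi(L\,R) = \varphi(L)\,\varphi(R)$ with a binary root contributes one inner node matching the one application constructor, and by induction the subtrees account for $|\varphi(L)|$ and $|\varphi(R)|$; the base clause $\varphi(\bullet) = \idx{0}$ matches a single inner node with $|\idx{0}| = 1$; the successor/lift clauses each add one inner node and one constructor ($\succ$ or an $\Uparrow$), and so on. Because the recursive calls in the index and lift clauses are made on strictly smaller trees, the induction is well-founded and size preservation follows mechanically.

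For bijectivity, the cleanest route is to construct an explicit inverse $\psi \colon \cal{T} \to \cal{B}$ by recursion on the structure of \luterms~following the grammar~\eqref{eq:luterms:spec}, reading each clause of \autoref{fig:bijection} backwards: an index $\succ\idx{n}$ maps to a unary left node over $\psi(\idx{n})$, an application maps to a binary node, a closure $a[\uparrow]$ maps to the composite node shown, a closure $a[b/]$ to the corresponding binary-under-unary node, and a closure $a[\Uparrow^{n+1}(s)]$ to a unary left node over $\psi(a[\Uparrow^n(s)])$. Here I must be careful to decompose each term uniquely: every closure $\cal{T}[\cal{S}]$ has a substitution $\cal{S}$ that, by~\eqref{eq:luterms:spec}, is uniquely of exactly one of the forms $a/$, $\Uparrow(s)$, or $\uparrow$, so the target clause is unambiguously determined by the outermost substitution constructor. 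I would then verify $\varphi \circ \psi = \mathrm{id}$ and $\psi \circ \varphi = \mathrm{id}$ by matching clauses inductively.

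The main obstacle I anticipate is the bookkeeping around the two shape-dependent left-hand clauses and their interaction with the closure case. The exponent $n$ in $\Uparrow^n(s)$ is a derived quantity rather than an explicit syntactic constructor, so I must confirm that the grammar's three-way split of substitutions lines up exactly with the iterated-lift reading: a term of the form $a[\Uparrow^{n}(s)]$ with $n \geq 1$ corresponds to a substitution whose head is $\Uparrow$, peeling off one $\Uparrow$ at each recursive step until the base $n = 0$ where the substitution head is either $\cdot/$ or $\uparrow$, which are then handled by the binary-node and composite-node clauses respectively. Making this correspondence precise — and checking that the index clauses ($\succ\idx{n}$ built from $\idx{n}$) and the lift clauses do not overlap in their triggering conditions — is where the real content of the argument lies; once it is pinned down, the inductive verification of both round-trip identities is routine.
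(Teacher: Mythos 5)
Your argument is correct, and it is somewhat more explicit than the paper's own proof, which handles bijectivity differently. The paper disposes of size preservation exactly as you do (a clause-by-clause structural induction), but for injectivity and surjectivity it argues on the tree side: every maximal chain of successive left branches must terminate in a leaf, a single right turn, or a branching point, and these three terminators correspond precisely to the index clause, the $[\uparrow]$-closure clause, and the $[\cdot/]$-closure clause of $\varphi$; a final structural induction then closes the argument. You instead work on the term side, building an explicit inverse $\psi \colon \cal{T} \to \cal{B}$ by recursion on the grammar~\eqref{eq:luterms:spec} and checking both round-trip identities. The two routes encode the same unique-decomposition insight (your recursion peeling one $\Uparrow$ at a time from $a[\Uparrow^{n}(s)]$ is exactly the paper's descent along a maximal left chain), but yours buys two things the paper leaves implicit: a verification that the two value-dependent clauses of $\varphi$ --- dispatching on whether $\varphi(L)$ is $\idx{n}$ or $a[\Uparrow^{n}(s)]$ --- are exhaustive and mutually exclusive, so that $\varphi$ is a well-defined total function in the first place; and a concrete computable inverse, which is what one would actually formalise (the paper's accompanying Coq development takes essentially your route). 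The cost is the extra bookkeeping you correctly identify around the derived exponent $n$, which the paper's left-chain phrasing sidesteps by never naming it.
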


\begin{proof}
The fact that $\varphi$ is size-preserving follows as a direct examination of
    the rules defining $\varphi$, see~\autoref{fig:bijection}. A straightforward
    structural induction certifies that all translation rules keep the size of
    both sides equal.

    In order to prove that $\varphi$ is one-to-one and onto, we note that each maximal
    (in the sense that it cannot be further continued) sequence of successive left
    branches has to terminate in either a single leaf, hence corresponding to a
    de~Bruijn index through $\varphi$, a single right turn, see the second top
    rule of the right-hand side of~\autoref{fig:bijection}, or a branching point
    consisting of a left and right turn, see the third rule of the right-hand
    side of~\autoref{fig:bijection}. A final structural induction
    finishes the proof.
\end{proof}

With a computable map $\varphi \colon \cal{B} \to \cal{T}$ it is now possible to
translate plane binary trees to corresponding \luterms~in linear, in the size of
the binary tree, time. Composing $\varphi$ with effective samplers
(i.e.~computable functions constructing random, conditioned on size, structures)
for the former, we readily obtain effective samplers for random \luterms.

We offer a Coq proof of~\autoref{prop:luterms:bijection} together with a certified
Haskell implementation of $\varphi$ in an external
repository\footnote{see
\url{https://github.com/maciej-bendkowski/combinatorics-of-explicit-substitutions}.}
with supplementary materials to the current paper.
\begin{remark}
    Using R{\'e}my's elegant sampling algorithm~\cite{Remy85} constructing
    uniformly random, conditioned on size, plane binary trees of given size $n$
    with $\varphi$ provides a linear time, exact-size sampler for \luterms.  For
    a detailed presentation of R{\'e}my's algorithm, we refer the curious reader
    to~\cite{Knuth2006,BacherBodiniJacquot2013}.  Additional combinatorial
    parameters, such as for instance the number of specific redex sub-patterns in
    sampled terms, can be controlled using the tuning techniques
    of~\cite{BendkowskiBodiniDovgal2018} developed within the general framework
    of Boltzmann samplers~\cite{Duchon2004} and the exact-size sampling
    framework of the so-called recursive method~\cite{NijenhuisWilf1978}.
\end{remark}

\section{Statistical properties of random
$`l`y$-terms}\label{sec:statistical:properties}
In the current section we focus on quantitative properties of random terms.  We
start our quest with properties of explicit substitutions within \lucalculus. In
what follows, we investigate the proportion of \luterms~representing
intermediate steps of substitution in classic \lcalculus.

\subsection{Strict substitution forms}\label{subsec:strict:substitution:forms}
When a $`b$\nobreakdash-rule is applied and $(`lx. a) b$ is rewritten to $a[x :=
b]$ the meta-level substitution of $b$ for variable $x$ in $a$ is executed
somewhat outside of the calculus. In operational terms, the substitution $a[x := b]$ is
meant to be resolved ceaselessly and cannot be, for instance, suspended or even
(partially) omitted if it produces a dispensable result. Such a
resolution tactic is reflected in \lucalculus~in terms of the following notion
of strict substitution forms.

\begin{definition}
    A \luterm~$t$ is in \emph{strict substitution form} if there exist two pure
    (i.e.~without explicit substitutions) terms $a,b$ and a sequence
    $t_1,\ldots,t_n$ of \luterms~such that
    \begin{equation}
    a[b/] \to t_1 \to \cdots \to t_n = t
    \end{equation}
    and none of the above reductions is (Beta).

    Otherwise, $t$ is said to be in \emph{lazy substitution form}.
\end{definition}

In other words, strict substitution forms represent the intermediate computations
of resolving substitutions in the classic \lcalculus. Certainly, by design
\lucalculus~permits more involved resolution tactics, mixing for instance
$Beta$\nobreakdash-reduction and $`y$\nobreakdash-reductions. In the following
proposition we show that the proportion of terms representing the indivisible,
classic resolution tactic tends to zero with the term size tending to infinity.

\begin{prop}
Asymptotically almost all $`l\upsilon$-terms are in lazy substitution form.
\end{prop}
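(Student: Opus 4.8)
The plan is to pass to the complementary class and show that \emph{strict} substitution forms are asymptotically negligible, i.e.\ that their density among all \luterms{} tends to $0$. The whole argument rests on a single structural observation: since no reduction witnessing a strict substitution form is (Beta), the only pure term ever placed under a slash is the original $b$, and $b$ is never itself rewritten. First I would record the generating function $P(z)$ of \emph{pure} terms, satisfying $P(z) = \frac{z}{1-z} + z P(z) + z {P(z)}^2$; its dominant singularity $\rho_P$ is the least positive root of ${(1-z)}^3 = 4 z^2$, and since ${(1-z)}^3 = \frac{27}{64} > \frac14 = 4z^2$ at $z=\frac14$ while the left side decreases and the right side increases on $(0,1)$, we get $\rho_P > \frac14$. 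Thus pure terms already grow at a rate strictly below $4$.

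Next I would make the observation precise through a super-class. Let $\mathcal{M}$ be the class of \luterms{} in which every closure $[\sigma]$ carries a substitution of one of the two shapes $\Uparrow^{k}(p/)$ with $p$ pure, or $\Uparrow^{j}(\uparrow)$. I claim that every strict substitution form lies in $\mathcal{M}$, and this I would prove by induction on the length of a (Beta)-free reduction issued from $a[b/]$ with $a,b$ pure. Inspecting the rules of \autoref{fig:lambda:ups:rewriting:system}, the substitution-bearing primitives are only ever transformed by (Lambda), which wraps an existing substitution under a further $\Uparrow$, and by (RVarLift), which spawns a fresh $\uparrow$-closure; no rule inserts an impure term, nor a freshly formed slash, under a closure, so both invariants on the admissible shapes of substitutions are preserved.

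Translating $\mathcal{M}$ by the symbolic method then yields $M(z) = N(z) + z M(z) + z {M(z)}^2 + z M(z)\,\Sigma(z)$, where $N(z) = \frac{z}{1-z}$ and, reading off the two admissible substitution shapes, $\Sigma(z) = \frac{z P(z)}{1-z} + \frac{z}{1-z} = \frac{z(P(z)+1)}{1-z}$. This is exactly the shape of the system behind \autoref{prop:luterms:catalan}, \emph{except} that the restricted closure factor $\Sigma$ has replaced the full substitution series; solving the quadratic gives $M = \frac{(1 - z - z\Sigma) - \sqrt{\Delta}}{2z}$ with $\Delta = {(1 - z - z\Sigma)}^2 - 4 z N$. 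The decisive difference is that $\Sigma$ is analytic at $z=\frac14$ (because $\rho_P > \frac14$), whereas the unrestricted substitution series of \autoref{prop:luterms:catalan} is singular there; consequently the closure term can no longer force a singularity at $\frac14$, and the only candidate singularities of $M$ are $\rho_P$ and the least positive root of $\Delta$.

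The hard part will be to certify that this least root, and hence the dominant singularity $\rho_M$ of $M$, lies strictly to the right of $\frac14$. I would argue that on $(0,\tfrac14]$ the base $1 - z - z\Sigma(z)$ is positive and decreasing, so ${(1 - z - z\Sigma(z))}^2$ is decreasing, while $4 z N(z) = \frac{4z^2}{1-z}$ is increasing; hence $\Delta$ is decreasing there, and since a short computation gives $\Delta\!\left(\tfrac14\right) > 0$ (it reduces to $\sqrt{33} > 5$) we obtain $\Delta > 0$ on $(0,\tfrac14]$, whence $\rho_M = \min\{\rho_P,\text{least positive root of }\Delta\} > \tfrac14$. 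By \autoref{th:pringsheim} and \autoref{th:exponential-growth-formula} this gives $[z^n] M(z) = \rho_M^{-n}\theta(n)$ with subexponential $\theta$ and $\rho_M^{-1} < 4$. As strict substitution forms are a subclass of $\mathcal{M}$ and, by \autoref{prop:luterms:catalan}, $[z^n] T(z) \sim \frac{4^n}{\sqrt{\pi}\,n^{3/2}}$, the proportion of strict substitution forms is bounded by a quantity decaying exponentially to $0$ (the growth rate $\rho_M^{-1}$ being strictly below $4$). Therefore asymptotically almost all \luterms{} are in lazy substitution form.
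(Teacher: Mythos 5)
Your proposal is correct and follows essentially the same route as the paper: your class $\mathcal{M}$ (closures carrying only $\Uparrow^{k}(p/)$ with $p$ pure or $\Uparrow^{j}(\uparrow)$) coincides with the paper's superclass $\overline{\cal{T}}$ of terms without nested substitutions, your system for $M$ and $\Sigma$ is the paper's system for $\overline{T}$ and $\overline{S}$, and the singularity comparison (radicand positive on $(0,\tfrac14]$ via monotonicity plus evaluation at $\tfrac14$, then the exponential growth formula against the Catalan rate $4$) is the paper's argument. Your rule-by-rule verification that (Beta)-free reductions from $a[b/]$ preserve the admissible closure shapes is in fact spelled out more carefully than in the paper.
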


\begin{proof}
    We argue that the set of strict substitution forms is asymptotically
    negligible in the set of all \luterms. Consequently, its compliment,
    i.e.~lazy substitution forms, admits an asymptotic density one, as claimed.

    Consider the class $\overline{\cal{T}}$ of \luterms~containing nested
    substitutions, i.e.~subterms in form of $a[b/]$ where $b$ is impure.  Note
    that if $t$ contains nested substitutions, then $a$ cannot be in strict
    substitution form. Let us therefore estimate the asymptotic density of terms
    without nested substitutions.  Following the combinatorial
    specification~\eqref{eq:luterms:spec} for \luterms~we can write down the
    following specification for $\overline{\cal{T}}$ using the auxiliary classes
    $\overline{\cal{S}}$ of (restricted) substitutions, and $\cal{P}$ of pure
    \luterms:
    \begin{align}\label{eq:lazy:sub:forms:spec}
        \begin{split}
            \overline{\cal{T}} &::= \cal{N} ~|~ `l \overline{\cal{T}} ~|~
        \overline{\cal{T}} \overline{\cal{T}} ~|~ \overline{\cal{T}} [\overline{\cal{S}}]\\
            \overline{\cal{S}} &::= \cal{P}/ ~|~ \Uparrow(\overline{\cal{S}}) ~|~ \uparrow\\
            \cal{P} &::= \cal{N} ~|~ `l \cal{P} ~|~ \cal{P} \cal{P}.
        \end{split}
    \end{align}
    Note that~\eqref{eq:lazy:sub:forms:spec} is almost identical
    to~\eqref{eq:luterms:spec} except for the fact that we permit only pure
    terms under the slash operator in the definition of $\overline{\cal{S}}$. We
    can now apply symbolic methods and establish a corresponding system of
    generating functions:
    \begin{align}\label{eq:lazy:sub:forms:system}
        \begin{split}
            \overline{T}(z) &= N(z) + z \overline{T}(z) + z {\overline{T}(z)}^2
            + z \overline{T}(z) \overline{S}(z)\\
            \overline{S}(z) &= z P(z) + z \overline{S}(z) + z\\
            P(z) &= N(z) + z P(z) + z {P(z)}^2.
        \end{split}
    \end{align}
    Solving~\eqref{eq:lazy:sub:forms:system} for $\overline{T}(z)$ we find that
    \begin{equation}\label{eq:lazy:sub:genfun}
        \overline{T}(z) = \frac{1 - z - z \overline{S}(z) - \sqrt{\left(1 -z -z
        \overline{S}(z)\right)^2-\frac{4 z^2}{1-z}}}{2 z}
    \end{equation}
    whereas
    \begin{equation}\label{eq:lazy:sub:genfun:ii}
        \overline{S}(z) = \frac{z + z P(z)}{1 -z} \quad \text{and} \quad
        P(z) = \frac{1-z-\sqrt{\left(1-z\right)^2-\frac{4z^2}{1-z}}}{2z}.
    \end{equation}

    Note that both $\overline{S}(z)$ and $P(z)$ share a common, unique dominant
    singularity $\rho \doteq 0.295598$ being the smallest positive root of the radicand
    expression $\left(1-z\right)^2-\dfrac{4z^2}{1-z}$ in the defining formula of
    $P(z)$, see~\eqref{eq:lazy:sub:genfun:ii}. Moreover, due to the presence of
    the expression $z \overline{S}(z)$ in the numerator
    of~\eqref{eq:lazy:sub:genfun} $\rho$ is also a singularity of $\overline{T}(z)$.
    Denote the radicand expression of~\eqref{eq:lazy:sub:genfun} as $R(z)$.
    Note that
    \begin{equation}
        \dfrac{d}{d z} R(z) =  - 2 \left(1 - z - z \overline{S}(z)\right)
        \left(1 + \overline{S}(z) + z\dfrac{d}{d z} \overline{S}(z)\right)-\frac{4 z^2}{(1-z)^2}-\frac{8
   z}{1-z}.
    \end{equation}
    Since $\overline{S}(z)$ is a generating function with non-negative integer
    coefficients both $\overline{S}(z)$ and its derivative
    $\dfrac{d}{d z} \overline{S}(z)$ are positive in the interval
    $z \in \left(0,\rho\right)$. Therefore, the derivative
    $\dfrac{d}{d z} R(z)$ is negative for suitable values $z$
    satisfying $1 - z - z \overline{S}(z) \geq 0$. A direct
    computation verifies that $0 < z \leq \rho$ satisfy this condition.
    Consequently, $R(z)$ is decreasing in the interval $z \in \left(0,\rho\right)$.

    At this point we note that $R(0) = 1$ whereas $R(\frac{1}{4}) > 0$. It
    follows therefore that $R(z)$ has no roots in the interval
    $(0,\frac{1}{4})$. Since the generating function $T(z)$ corresponding to
    all (unrestricted) \luterms~has a single dominant singularity $\zeta =
    \frac{1}{4}$, see~\eqref{eq:plain:terms:genfun:final}, a straightforward
    application of the exponential growth formula
    (see~\autoref{th:exponential-growth-formula}) reveals that \luterms~without
    nested substitutions are asymptotically negligible in the set of all
    \luterms. So is, as well, its subset of strict substitution forms.
\end{proof}

\subsection{Suspended substitutions}\label{subsec:suspended:substitutions}
Closures in \luterms~are intended to represent suspended, unevaluated
substitutions in the classic \lcalculus. In other words, substitutions whose
resolution is meant to be carried out in a non-strict manner. It the current
section we investigate the quantitative impact of this suspension on random
terms.

\begin{definition}
    Let $s$ be an substitution and $t$ be a \luterm. Then, $s$, all its
    subterms and, all the constructors it contains
are said to be \emph{suspended in $t$} if $t$ contains a subterm in
    form of $[s]$; in other words, when $s$ occurs under a closure in $t$.
\end{definition}

In the following proposition we show that, in expectation, almost all of the
term content (i.e.~represented computation) is suspended under closures.

\begin{prop}
Let $X_n$ be a random variable denoting the number of constructors not suspended
    under a closure in a random \luterm~of size $n$. Then, the expectation
    $\mathbb{E}(X_n)$ satisfies
    \begin{equation}\label{eq:closures:mean:final}
        \mathbb{E}(X_n) \xrightarrow[n\to\infty]{} \dfrac{316}{3}.
    \end{equation}
\end{prop}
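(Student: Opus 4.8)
The plan is to introduce a bivariate generating function that marks the unsuspended constructors directly, and then read off the limit mean by singularity analysis. The structural observation driving the whole computation is that, in the grammar~\eqref{eq:luterms:spec}, a substitution $\cal{S}$ occurs only inside a closure $\cal{T}[\cal{S}]$; hence every constructor lying within a substitution is automatically suspended, and once one descends into the substitution branch of a closure there is nothing left unsuspended to mark. Consequently suspended material is enumerated by the ordinary generating functions $T(z)$ and $S(z)$ of~\autoref{prop:luterms:catalan}, carrying no marking variable. Writing $\hat{T}(z,u)$ for terms sitting in an unsuspended context, where $u$ marks unsuspended constructors, symbolic methods give
\begin{equation}
\hat{T}(z,u) = \frac{zu}{1-zu} + zu\,\hat{T}(z,u) + zu\,{\hat{T}(z,u)}^2 + zu\,\hat{T}(z,u)\,S(z),
\end{equation}
the four summands corresponding to de~Bruijn indices, abstraction, application, and closure; note that in the closure term the term part remains unsuspended whereas the whole substitution $S(z)$ is suspended. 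Setting $u=1$ recovers the defining equation of $T(z)$, so $\hat{T}(z,1) = T(z)$ and $[z^n]\hat{T}(z,1)$ is exactly the normalising count of all \luterms~of size $n$.

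I would then differentiate this functional equation with respect to $u$, set $u=1$, and put $G(z) := \partial_u \hat{T}(z,u)\big|_{u=1}$. Since $G$ enters linearly (the quadratic term contributes only $2\hat{T}\,\partial_u\hat{T}$), one solves explicitly to obtain
\begin{equation}
G(z) = \frac{\dfrac{z}{(1-z)^2} + zT(z) + z{T(z)}^2 + zT(z)S(z)}{1 - z - 2zT(z) - zS(z)},
\end{equation}
whence $\mathbb{E}(X_n) = [z^n]G(z) \big/ [z^n]T(z)$. The crucial analytic point is that the denominator does not vanish on $(0,\tfrac14]$: evaluating at the dominant singularity $z=\tfrac14$, where $T(\tfrac14)=1$ and $S(\tfrac14)=\tfrac23$, gives the value $\tfrac{1}{12}\neq 0$. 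Together with the fact that $T,S$ have non-negative coefficients (so the denominator is monotone on $(0,\tfrac14)$), this shows $G$ inherits exactly the square-root singularity of $T$ and $S$ at $z=\tfrac14$ and acquires no spurious dominant singularity that could alter the $4^n n^{-3/2}$ growth.

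Finally I would perform a Puiseux expansion at $z=\tfrac14$, conveniently setting $W=\sqrt{1-4z}$ so that $z=\tfrac{1-W^2}{4}$ carries no linear term in $W$ and the Catalan generating function simplifies to $C(z)=\tfrac{2}{1+W}$; this yields the clean local expansions $T = 1 - 2W + O(W^2)$ and $S = \tfrac23 - \tfrac23 W + O(W^2)$. Substituting into the closed form for $G$ and collecting the linear-in-$W$ coefficient gives $G(z) = \tfrac{40}{3} - \tfrac{632}{3}\sqrt{1-4z} + O\!\left(1-4z\right)$. Applying the standard function scale (\autoref{th:standard-func-scale}) to the $\sqrt{1-4z}$ parts of both $G$ and $T$, whose $4^n n^{-3/2}$ factors and $\Gamma(-\tfrac12)$ contributions cancel in the ratio, leaves $\mathbb{E}(X_n) \to \tfrac{632/3}{2} = \tfrac{316}{3}$, as claimed. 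The main obstacle is purely computational, namely organising the Puiseux expansion of the rational expression for $G$ so that the coefficient of $\sqrt{1-4z}$ is extracted without error; the only genuinely analytic step, verifying that the denominator stays positive up to the singularity, is settled by the single evaluation at $z=\tfrac14$ together with monotonicity on $(0,\tfrac14)$.
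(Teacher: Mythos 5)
Your proposal follows essentially the same route as the paper: the same bivariate equation $T(z,u)=\frac{zu}{1-zu}+zu\,T(z,u)+zu\,T(z,u)^2+zu\,T(z,u)S(z)$, the same explicit solution for $\partial_u T(z,u)\vert_{u=1}$, and the same singularity analysis at $z=\tfrac14$; your local expansion $G(z)=\tfrac{40}{3}-\tfrac{632}{3}\sqrt{1-4z}+O(1-4z)$ and the resulting limit $\tfrac{632/3}{2}=\tfrac{316}{3}$ check out, and your explicit verification that the denominator equals $\tfrac{1}{12}$ at the singularity (hence stays positive on $(0,\tfrac14]$ by monotonicity) makes precise a step the paper leaves implicit. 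The argument is correct.
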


\begin{proof}
    Let $T(z,u)$ be a bivariate generating function where $[z^n u^k]T(z,u)$
    denotes the number of \luterms~of size $n$ with $k$ constructors not
    suspended under a closure. In other words, the number of \luterms~of size
    $n$ in which suspended substitutions are of total size $n - k$.

    Given the specification~\eqref{eq:luterms:spec} for $\cal{T}$ we introduce a
    new variable $u$ and mark with it constructors which do not occur under
    closures. Note that, in doing so, we no not mark $\cal{T}$ recursively in
    $\cal{S}$.  Following symbolic methods we note that $T(z,u)$ satisfies
    \begin{equation}\label{eq:closures:spec}
        T(z,u) = \frac{z u}{1- z u} + z u T(z,u) + z u {T(z,u)}^2 + z u T(z,u)
        S(z).
    \end{equation}
    Taking the derivative $\partial u$ at $u = 1$ of both sides
    of~\eqref{eq:closures:spec} we arrive at
    \begin{align}\label{eq:closures:deriv}
        \begin{split}
            \deriv{u}{T(z,u)}\at{u=1} &= \dfrac{z}{\left(1- z\right)^2} + T(z,1) \bigg(z + z T(z,1) + z S(z)
            \bigg) \\& \qquad\qquad+ \deriv{u}{T(z,u)}\at{u=1} \bigg(z + 2 z T(z,1) + z S(z) \bigg)
        \end{split}
    \end{align}
    and hence
    \begin{equation}\label{eq:closures:deriv:ii}
        \deriv{u}{T(z,u)}\at{u=1} = \dfrac{\dfrac{z}{\left(1- z\right)^2} + T(z,1) \bigg(z + z T(z,1) + z S(z)
        \bigg)}{1 - z - 2 z T(z,1) - z S(z)}.
    \end{equation}

    From~\eqref{eq:plain:terms:genfun:final} we note that both $T(z)$ and $S(z)$
    admit Puiseux expansions in form of $`a - `b\sqrt{1-4z}+ O(\big|1-4z\big|)$
    for some appropriate (different for $T(z)$
    and $S(z)$) constants $`a,`b > 0$. Consequently, both
    the numerator and denominator of~\eqref{eq:closures:deriv} admit
    Puiseux expansions of similar form. Furthermore, as
    \begin{equation}
        \dfrac{a - b \sqrt{1-4z} + O\left(\big|1-4z\big|\right)}{c - d
        \sqrt{1-4z}+ O\left(\big|1-4z\big|\right)} =
         \left(\frac{a d}{c^2-4 d^2 z+d^2}-\frac{b c}{c^2-4 d^2
        z+d^2}\right) \sqrt{1-4 z} + O\left(\big|1-4z\big|\right)
    \end{equation}
    we conclude that
    \begin{equation}
        \deriv{u}{T(z,u)}\at{u=1} = `g - `d \sqrt{1-4z} +
        O\left(\big|1-4z\big|\right)
    \end{equation}
    near $z = \frac{1}{4}$ for some (computable) constants $`g,`d > 0$.

    An application of the standard function scale
    (see~\autoref{th:standard-func-scale}) provides now the asymptotic
    estimate
    \begin{equation}
        \mathbb{E}(X_n) = \dfrac{[z^n]\deriv{u}{T(z,u)}\at{u=1}}{[z^n]T(z,1)}
        \xrightarrow[n\to\infty]{} C.
    \end{equation}
    A direct calculation gives the specific
    quantity~\eqref{eq:closures:mean:final} of $C$.
\end{proof}

\subsection{Substitution resolution primitives}\label{subsec:redexes}
The internalisation of substitution in \lucalculus~introduces several new types
of redexes governing the resolution of closures,
see~\autoref{fig:lambda:ups:rewriting:system}. Instead of a single
$`b$\nobreakdash-redex, specific implementations of the \lucalculus~rewriting
system, such as for instance the abstract U-machine, have to handle eight
rewriting rules together with their intricate interaction.

In the current section we investigate the distribution of specific redexes in
random \luterms, providing insight in the quantitative contribution of various
substitution resolution primitives. Since all redexes share virtually the same
proof scheme, for convenience, we provide detailed arguments only for the (Beta)
rule.  Remaining proofs are merely sketched.

\subsubsection{(Beta) redexes}
\begin{prop}
Let $X_n$ be a random variable denoting the number of $`b$\nobreakdash-redexes
    in a random \luterm~of size $n$. Then, after standardisation, $X_n$
    converges in law to a Gaussian distribution with speed of convergence of
    order $O\left(\frac{1}{\sqrt{n}}\right)$. The limit expectation
    $\mathbb{E}(X_n)$ and variance $\mathbb{V}(X_n)$ satisfy
\begin{equation}\label{eq:redex:beta:mean:variance:final}
    \mathbb{E}(X_n) \xrightarrow[n\to\infty]{} \frac{3}{64} n \quad \text{and}
    \quad \mathbb{V}(X_n) \xrightarrow[n\to\infty]{} \frac{153}{4096} n
\end{equation}
\end{prop}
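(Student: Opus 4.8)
The plan is to follow the bivariate generating function scheme of \autoref{sec:analytic:toolbox} and to conclude via the Quasi-powers theorem (\autoref{th:quasi:powers:theorem}). First I would introduce a bivariate generating function $T(z,u)$ in which, as before, $z$ marks the term size whereas the new variable $u$ marks occurrences of the $`b$\nobreakdash-redex pattern $(`l a) b$. Since such redexes may occur anywhere in a \luterm, in particular under closures inside the terms guarded by the slash operator, I must propagate the marking variable into the substitution generating function $S(z,u)$ as well. Refining the specification~\eqref{eq:luterms:spec}, I split the application production $\cal{T}\cal{T}$ according to whether its left argument is an abstraction: a genuine redex $(`l a) b$ contributes $u z^2 {T(z,u)}^2$, whereas a non\nobreakdash-redex application contributes $z {T(z,u)}^2 - z^2 {T(z,u)}^2$, with the left factor ranging over non\nobreakdash-abstraction terms. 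Summing both and using symbolic methods yields the system
\begin{align}
  \begin{split}
    T(z,u) &= \frac{z}{1-z} + z T(z,u) + z {T(z,u)}^2 \left(1 + z(u-1)\right) + z T(z,u) S(z,u)\\
    S(z,u) &= z T(z,u) + z S(z,u) + z,
  \end{split}
\end{align}
which collapses to~\eqref{eq:plain:terms:genfun:system:ii} at $u = 1$, a useful sanity check.

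Next I would eliminate $S(z,u) = \frac{z(1 + T(z,u))}{1-z}$ and reduce the first equation to a quadratic in $T(z,u)$, namely $\left(z + z^2(1-z)(u-1)\right){T(z,u)}^2 - (1-2z)\,T(z,u) + z = 0$, whose relevant (minus) branch is governed by the discriminant $\Delta(z,u) = 1 - 4z - 4 z^3 (1-z)(u-1)$. At $u = 1$ this reduces to $1 - 4z$, consistent with $T(z) = C(z) - 1$ from \autoref{prop:luterms:catalan}, so that the dominant singularity satisfies $\rho(1) = \frac14$.

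For the distributional statement I would argue that, for $u$ in a sufficiently small complex neighbourhood of $1$, the equation $\Delta(z,u) = 0$ has a unique root $\rho(u)$ near $\frac14$ depending analytically on $u$ (by the implicit function theorem, since $\partial_z \Delta(\tfrac14, 1) = -4 \neq 0$), and that $\rho(u)$ remains the unique dominant singularity of $T(z,u)$, of the square\nobreakdash-root type. A uniform application of singularity analysis (cf.~\autoref{th:newton-puiseux} and~\autoref{th:standard-func-scale}) then yields $[z^n] T(z,u) \sim c(u)\, {\rho(u)}^{-n} n^{-3/2}$ uniformly in $u$, so that the probability generating function factorises as $p_n(u) = \frac{[z^n]T(z,u)}{[z^n]T(z,1)} = A(u)\,{B(u)}^n \left(1 + O\left(\frac1n\right)\right)$ with $A(u) = c(u)/c(1)$ and $B(u) = \rho(1)/\rho(u)$. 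With $\beta_n = n$ and $\kappa_n = n$, the speed of convergence $O\left(\frac1{\kappa_n} + \frac1{\sqrt{\beta_n}}\right)$ reduces to the announced $O\left(\frac1{\sqrt n}\right)$, and the Quasi-powers theorem (\autoref{th:quasi:powers:theorem}) delivers the Gaussian limit once the variability condition~\eqref{eq:quasi:powers:variability} is verified.

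Finally I would extract the moments from $B(u) = \frac{1}{4\rho(u)}$ by implicit differentiation of $\Delta(\rho(u), u) = 0$. Using $\rho'(u) = -\partial_u\Delta / \partial_z\Delta$, at $u = 1$ one finds $\partial_u \Delta = -4 z^3(1-z)$ evaluating to $-\frac{3}{64}$ at $z = \frac14$ and $\partial_z \Delta = -4$, whence $\rho'(1) = -\frac{3}{256}$ and $B'(1) = -4\rho'(1) = \frac{3}{64}$, giving $\mathbb{E}(X_n) \sim \frac{3}{64} n$ through~\eqref{eq:quasi:powers:moments}. A second implicit differentiation produces $\rho''(1)$, hence $B''(1)$, and substituting into $B''(1) + B'(1) - {B'(1)}^2$ both verifies the variability condition (the value $\frac{153}{4096}$ is nonzero) and yields $\mathbb{V}(X_n) \sim \frac{153}{4096} n$. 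I expect the main obstacle to lie not in these routine differentiations but in the uniformity step: rigorously certifying the analytic perturbation picture, that is, that $\rho(u)$ varies analytically and remains the sole dominant square\nobreakdash-root singularity as $u$ ranges over a neighbourhood of $1$, so that singularity analysis transfers uniformly and the quasi\nobreakdash-power approximation of $p_n(u)$ is legitimate.
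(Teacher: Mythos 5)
Your proposal is correct and follows essentially the same route as the paper: the bivariate system you write down is identical to the paper's (the term $z{T(z,u)}^2\left(1+z(u-1)\right)$ equals the paper's $z{T(z,u)}^2+(u-1)z^2{T(z,u)}^2$), and the conclusion is drawn from a moving square-root singularity via the Quasi-powers theorem exactly as in the paper. Your only deviation is computational and advantageous: by eliminating $S(z,u)$ first you obtain the clean discriminant $1-4z-4z^3(1-z)(u-1)$ and extract $\rho'(1)$, $\rho''(1)$ by implicit differentiation, which is precisely the shortcut the paper itself recommends in its closing remark in place of the explicit radical formula for $\rho(u)$; your resulting values $B'(1)=\tfrac{3}{64}$ and $B''(1)+B'(1)-{B'(1)}^2=\tfrac{153}{4096}$ check out.
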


\begin{proof}
Routinely, we begin our considerations with
establishing a combinatorial specification for corresponding
$`b$\nobreakdash-redexes. Note that given the specification $\cal{T}$ for
general \luterms~\eqref{eq:luterms:spec} we can rewrite the left operand in the
$(\cal{T} \cal{T})$ production using the recursive specification for $\cal{T}$.
Consequently, we obtain the following modified combinatorial specification:
\begin{align}\label{eq:redex:beta:spec}
\begin{split}
\cal{T} &::= \cal{N}~|~`l \cal{T}~|~\cal{T} [\cal{S}]~|~\overbrace{\cal{N}
    \cal{T}~|~\underline{(`l \cal{T}) \cal{T}}~|~(\cal{T} [\cal{S}])
    \cal{T}~|~(\cal{T} \cal{T})\cal{T}}^{(\cal{T}\cal{T})}\\
    \cal{S} &::= \cal{T}/ ~|~ \Uparrow(\cal{S}) ~|~ \uparrow\\
    \cal{N} &::= \idx{0} ~|~ \succ \cal{N}.
\end{split}
\end{align}
Note that in this form, specification~\eqref{eq:redex:beta:spec} explicitly uses
the production $(`l \cal{T}) \cal{T}$ associated with $`b$\nobreakdash-redexes. Since the above
specification is unambiguous, i.e.~each \luterm~has precisely one derivation
starting from $\cal{T}$, we can further convert it into the following system of
corresponding bivariate generating functions marking $`b$\nobreakdash-redexes:
\begin{align}\label{eq:redex:beta:gfun}
\begin{split}
    T(z,u) &= \dfrac{z}{1-z} + z T(z,u) + z T(z,u) S(z,u) + {z T(z,u)}^2 + (u-1)z^2
    {T(z,u)}^2\\
    S(z,u) &= z T(z,u) + z S(z,u) + z.
\end{split}
\end{align}
Note that instead of a direct transformation, we use the, somewhat
indirect, $z^2 {T(z,u)}^2 \left(u-1\right)$ expression to denote
\luterms~in application form where each $`b$\nobreakdash-redex is
marked with variable~$u$.  It should be understood as this: we count
first the whole $(\cal{T}\cal{T})$ by $ z T(z,u)^2$, then we mark the
$`b$-redexes by $u z^2 T(z,u)^2$, then we remove the over-counted
$`b$-redexes by $- z^2 T(z,u)^2$.  At this point, we
solve~\eqref{eq:redex:beta:gfun} and find that the generating function
$T(z,u)$ satisfies
\begin{equation}\label{eq:redex:beta:gfun:ii}
    T(z,u) = \frac{1-z-\frac{z^2}{1-z} -
    \sqrt{\left(1-z-\frac{z^2}{1-z}\right)^2 -\frac{4 z^2 \left(1+(u-1)
    z+\frac{z}{1-z}\right)}{1-z} }}{2z
    \left(1+ (u-1) z+\frac{z}{1-z}\right)}.
\end{equation}

In this form it is clear that the dominant singularity $\rho(u)$ is carried by
the radicand expression of $T(z,u)$, see~\eqref{eq:redex:beta:gfun:ii}.
Furthermore, the singularity $\rho(u)$ is non-constant and \emph{moving},
i.e.~varies smoothly with $u$. Its specific form can be readily accessed by
equating the above radicand expression to zero and noting that
\begin{equation}\label{eq:redex:beta:rho}
    \rho(u) = \frac{1}{4} + \frac{1}{2} \sqrt{\frac{\sqrt[3]{u+3}}{2^{2/3}
   (u-1)^{2/3}}+\frac{1}{4}}-\frac{1}{2} \sqrt{\frac{u+7}{4 (u-1)
   \sqrt{\frac{\sqrt[3]{u+3}}{2^{2/3}
   (u-1)^{2/3}}+\frac{1}{4}}}-\frac{\sqrt[3]{u+3}}{2^{2/3}
   (u-1)^{2/3}}+\frac{1}{2}}
\end{equation}
is the only solution satisfying $\lim_{u\to1} \rho(u) =
\frac{1}{4}$ corresponding to the dominant singularity of $T(z,1)$.
    Consequently, $\rho(u)$ can be analytically continued
    onto a larger domain containing the point $u = 1$. Let $\rho(u)$ denote, by
    a slight abuse of notation, this continuation of~\eqref{eq:redex:beta:rho}.
It follows that $T(z,u)$ can be uniquely represented as
\begin{equation}
    T(z,u) = `a (z,u) + `b (z,u) \sqrt{1-\dfrac{z}{\rho(u)}}
\end{equation}
where both $`a(z,u)$ and $`b(z,u)$ are non-vanishing near $(z,u) =
(\frac{1}{4},1)$. With $u$ fixed sufficiently close to one, we can now apply the
standard function scale (see~\autoref{th:standard-func-scale}) and obtain the
estimate
\begin{equation}
    [z^n] T(z,u) = `g \left(\dfrac{1}{\rho(u)}\right)^n \quad \text{with} \quad
    `g = \dfrac{`b\left(\rho(u),u\right)}{2 \sqrt{\pi} n^{3/2}}.
\end{equation}
Consequently, the probability generating function $p_n(u)$ satisfies
\begin{equation}
    p_n(u) = \dfrac{[z^n] T(z,u)}{[z^n] T(z,1)} = \overline{`g}
    \left(\dfrac{\rho(1)}{\rho(u)}\right)^n \left(1 + O\left(\dfrac{1}{n}\right) \right)
    \quad \text{where} \quad \overline{`g} =
    \dfrac{`b\left(\rho(u),u\right)}{`b\left(\rho,1\right)}.
\end{equation}
Such a form of $p_n(u)$ matches the premises of
the Quasi-powers theorem (see~\autoref{th:quasi:powers:theorem}) taking
\begin{equation}
    A(u) =  \dfrac{\beta(\rho(u), u)}{\beta(\rho, 1)}\, ,
    \quad B(u) = \dfrac{\rho(1)}{\rho(u)}
    \quad  \text{and} \quad
    \beta_n = \kappa_n = n.
\end{equation}
Given the explicit formula~\eqref{eq:redex:beta:rho} for $\rho(u)$ a routine
calculation verifies the requested variability
condition~\eqref{eq:quasi:powers:variability}.  Consequently, an application of the
Quasi-powers theorem finishes the proof. The limit expectation and
variance~\eqref{eq:redex:beta:mean:variance:final} associated with $X_n$ can be
computed using formulas~\eqref{eq:quasi:powers:moments}.
\end{proof}

\subsubsection{(App) redexes}
In order to mark occurrence of (App) redexes, we take the
specification~\eqref{eq:luterms:spec} of $\cal{T}$ and rewrite the production
$\cal{T}[\cal{S}]$ into four, more detailed ones, including an explicit
production for (App) redexes.
\begin{align}\label{eq:redex:app:spec}
\begin{split}
\cal{T} &::= \cal{N}~|~`l \cal{T}~|~\cal{T} \cal{T}~|~\overbrace{\cal{N}
    [\cal{S}]~|~(`l \cal{T}) [\cal{S}]~|~(\cal{T} [\cal{S}])
    [\cal{S}]~|~\underline{(\cal{T} \cal{T})[\cal{S}]}}^{\cal{T}[\cal{S}]}\\
    \cal{S} &::= \cal{T}/ ~|~ \Uparrow(\cal{S}) ~|~ \uparrow\\
    \cal{N} &::= \idx{0} ~|~ \succ \cal{N}.
\end{split}
\end{align}
A direct translation onto the level of corresponding generation functions gives
\begin{align}\label{eq:redex:app:gfun}
\begin{split}
    T(z,u) &= \dfrac{z}{1-z} + z T(z,u) + z {T(z,u)}^2 + z T(z,u) S(z,u)
        + (u-1) z^2 {T(z,u)}^2 S(z,u)\\
    S(z,u) &= z T(z,u) + z S(z,u) + z.
\end{split}
\end{align}
A detailed analysis provides then the following result.
\begin{prop}
    Let $X_n$ be a random variable denoting the number of (App)\nobreakdash-redexes
    in a random \luterm~of size $n$. Then, after standardisation, $X_n$
    converges in law to a Gaussian distribution with speed of convergence of
    order $O\left(\frac{1}{\sqrt{n}}\right)$. The limit expectation
    $\mathbb{E}(X_n)$ and variance $\mathbb{V}(X_n)$ satisfy
\begin{equation}\label{eq:redex:app:mean:variance:final}
    \mathbb{E}(X_n) \xrightarrow[n\to\infty]{} \frac{1}{32} n \quad \text{and}
    \quad \mathbb{V}(X_n) \xrightarrow[n\to\infty]{} \frac{45}{2048} n
\end{equation}
\end{prop}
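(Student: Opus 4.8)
Since the specification~\eqref{eq:redex:app:spec} and the resulting system~\eqref{eq:redex:app:gfun} mirror the (Beta) case, the plan is to follow the same singularity-analysis route, adapted to the different marking term. First I would eliminate $S(z,u)$ from~\eqref{eq:redex:app:gfun}: the second (linear) equation yields $S(z,u) = \frac{z\left(T(z,u)+1\right)}{1-z}$, which substituted into the first equation produces a single algebraic equation $P\left(z,u,T(z,u)\right) = 0$. Note that, in contrast to the (Beta) case, the marking term $(u-1)z^2 T(z,u)^2 S(z,u)$ now carries an extra factor of $S(z,u)$; after substitution this contributes a term of order $T(z,u)^3$, so $T(z,u)$ is a branch of a \emph{cubic} (rather than quadratic) in its last argument, with $T^3$-coefficient $(u-1)z^3$. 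At $u = 1$ this cubic term vanishes and the relevant branch reduces to $T(z,1) = \frac{1-\sqrt{1-4z}}{2z} - 1$, recovering the Catalan generating function of~\autoref{prop:luterms:catalan} with its square-root singularity at $z = \frac14$.

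Next I would locate the moving dominant singularity $\rho(u)$. Since $T(z,u)$ is algebraic, its singularities are branch points of the curve $P\left(z,u,w\right) = 0$, i.e.\ the points at which two sheets coalesce. These are characterised by the simultaneous vanishing $P\left(z,u,w\right) = 0$ and $\partial_w P\left(z,u,w\right) = 0$, which together determine $\rho(u)$ and the critical value $\tau(u) = T\left(\rho(u),u\right)$. Because the branch point at $u = 1$ is of square-root type (a generic fold with $\partial_w^2 P \neq 0$) and $\rho(1) = \frac14$, the implicit function theorem supplies an analytic continuation of $\rho(u)$ to a complex neighbourhood of $u = 1$ with $\lim_{u\to1}\rho(u) = \frac14$. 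Invoking the Newton--Puiseux theorem (see~\autoref{th:newton-puiseux}), the branch admits near $(z,u) = (\frac14, 1)$ a representation
\begin{equation}
    T(z,u) = `a(z,u) + `b(z,u)\sqrt{1 - \dfrac{z}{\rho(u)}}
\end{equation}
with $`a(z,u)$ and $`b(z,u)$ analytic and $`b$ non-vanishing at $(\frac14,1)$.

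From here the transfer is routine and identical to the (Beta) argument. The standard function scale (see~\autoref{th:standard-func-scale}) gives, uniformly for $u$ close to one,
\begin{equation}
    [z^n]T(z,u) = \dfrac{`b\left(\rho(u),u\right)}{2\sqrt{\pi}\, n^{3/2}}\,
    {\rho(u)}^{-n}\left(1 + O\left(\dfrac1n\right)\right),
\end{equation}
so that the probability generating function
\begin{equation}
    p_n(u) = \dfrac{[z^n]T(z,u)}{[z^n]T(z,1)} = A(u)\,{B(u)}^n\left(1 + O\left(\dfrac1n\right)\right)
    \quad\text{with}\quad A(u) = \dfrac{`b\left(\rho(u),u\right)}{`b\left(\tfrac14,1\right)},\;
    B(u) = \dfrac{\rho(1)}{\rho(u)}
\end{equation}
matches the premises of the Quasi-powers theorem (see~\autoref{th:quasi:powers:theorem}) with $\beta_n = \kappa_n = n$, yielding both the Gaussian limit law and the $O\left(\frac{1}{\sqrt n}\right)$ convergence rate. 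Finally, the variability condition~\eqref{eq:quasi:powers:variability} and the moment formulas~\eqref{eq:quasi:powers:moments} deliver the claimed expectation and variance~\eqref{eq:redex:app:mean:variance:final}.

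I expect the main obstacle to be the extraction of $\rho'(1)$ and $\rho''(1)$. Unlike the (Beta) case, where~\eqref{eq:redex:beta:rho} furnishes a closed radical form for $\rho(u)$, the cubic here offers no comparably clean expression, so $\rho(u)$ is best handled implicitly: differentiating the branch-point system $P = 0$, $\partial_w P = 0$ with respect to $u$ and solving the resulting linear relations at $(z,u,w) = (\frac14, 1, \tau(1))$ yields $\rho'(1)$ and $\rho''(1)$, hence $B'(1)$ and $B''(1)$. One then only needs to confirm that $B''(1) + B'(1) - {B'(1)}^2 \neq 0$ (equivalently that the variance coefficient $\frac{45}{2048}$ is non-zero) to secure the non-degeneracy required by the Quasi-powers theorem.
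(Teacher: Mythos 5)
Your proposal is correct and follows essentially the same route as the paper, which for (App) only sketches the argument by reference to the detailed (Beta) proof and to the closing remark on extracting $\rho'(1)$ and $\rho''(1)$ implicitly from the branch-point equations rather than from a closed radical form. The only slip is cosmetic: after eliminating $S(z,u)$ the $T^3$-coefficient is $(u-1)\frac{z^3}{1-z}$ rather than $(u-1)z^3$, which changes nothing in the analysis.
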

\subsubsection{(Lambda) redexes}
    The case of (Lambda) redexes is virtually identical to (App) redexes. This
    time, however, a different production is marked:
\begin{align}\label{eq:redex:lambda:spec}
\begin{split}
\cal{T} &::= \cal{N}~|~`l \cal{T}~|~\cal{T} \cal{T}~|~\overbrace{\cal{N}
    [\cal{S}]~|~\underline{(`l \cal{T}) [\cal{S}]}~|~(\cal{T} [\cal{S}])
    [\cal{S}]~|~(\cal{T} \cal{T})[\cal{S}]}^{\cal{T}[\cal{S}]}\\
    \cal{S} &::= \cal{T}/ ~|~ \Uparrow(\cal{S}) ~|~ \uparrow\\
    \cal{N} &::= \idx{0} ~|~ \succ \cal{N}.
\end{split}
\end{align}
Accordingly,
\begin{align}\label{eq:redex:lambda:gfun}
\begin{split}
    T(z,u) &= \dfrac{z}{1-z} + z T(z,u) + z {T(z,u)}^2 + z T(z,u) S(z,u)
        + (u-1) z^2 {T(z,u)} S(z,u)\\
    S(z,u) &= z T(z,u) + z S(z,u) + z.
\end{split}
\end{align}
\begin{prop}
    Let $X_n$ be a random variable denoting the number of (Lambda)\nobreakdash-redexes
    in a random \luterm~of size $n$. Then, after standardisation, $X_n$
    converges in law to a Gaussian distribution with speed of convergence of
    order $O\left(\frac{1}{\sqrt{n}}\right)$. The limit expectation
    $\mathbb{E}(X_n)$ and variance $\mathbb{V}(X_n)$ satisfy
\begin{equation}\label{eq:redex:lambda:mean:variance:final}
    \mathbb{E}(X_n) \xrightarrow[n\to\infty]{} \frac{1}{32} n \quad \text{and}
    \quad \mathbb{V}(X_n) \xrightarrow[n\to\infty]{} \frac{53}{2048} n
\end{equation}
\end{prop}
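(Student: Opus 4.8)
The plan is to follow the proof scheme established for (Beta) redexes, now applied to the system~\eqref{eq:redex:lambda:gfun}. First I would eliminate $S(z,u)$. As its defining equation is linear in $S$, solving yields $S(z,u) = \frac{z\left(1 + T(z,u)\right)}{1-z}$. Substituting this into the equation for $T(z,u)$ produces a quadratic in $T(z,u)$ whose admissible branch---the one analytic at the origin with non-negative Taylor coefficients---gives a closed form $T(z,u) = \frac{P(z,u) - \sqrt{R(z,u)}}{Q(z,u)}$ for suitable rational functions $P, Q$ and a radicand $R$.

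Next, I would locate the dominant singularity. As in the (Beta) case, it is carried by the radicand $R(z,u)$: the dominant singularity $\rho(u)$ is the smallest positive root of $R(z,u) = 0$, it is a moving singularity varying smoothly with $u$, and it satisfies $\rho(1) = \frac{1}{4}$, matching the dominant singularity of $T(z,1)$ from~\autoref{prop:luterms:catalan}. Since $R$ has a simple root in $z$ at $\left(\frac{1}{4}, 1\right)$, the implicit function theorem provides an analytic continuation of $\rho(u)$ to a complex neighbourhood of $u = 1$. I would then confirm that, near $(z,u) = \left(\frac{1}{4}, 1\right)$, the generating function admits the square-root representation $T(z,u) = \alpha(z,u) + \beta(z,u)\sqrt{1 - \frac{z}{\rho(u)}}$ with $\alpha$ and $\beta$ analytic and non-vanishing at the singularity, so that the square-root singularity type persists uniformly for $u$ near $1$.

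With this representation in hand, the standard function scale (\autoref{th:standard-func-scale}) yields $[z^n]T(z,u) \sim \gamma(u)\,\rho(u)^{-n} n^{-3/2}$ uniformly for $u$ in a neighbourhood of $1$. Consequently the probability generating function $p_n(u) = \frac{[z^n]T(z,u)}{[z^n]T(z,1)}$ assumes the quasi-powers form with $A(u) = \frac{\beta(\rho(u),u)}{\beta(\rho(1),1)}$, $B(u) = \frac{\rho(1)}{\rho(u)}$, and $\beta_n = \kappa_n = n$. Hwang's quasi-powers theorem (\autoref{th:quasi:powers:theorem}) then delivers the Gaussian limit law with the announced $O\!\left(\frac{1}{\sqrt{n}}\right)$ speed of convergence, contingent on the variability condition~\eqref{eq:quasi:powers:variability}.

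The main obstacle is the explicit verification of the variability condition together with the extraction of the exact moments. Rather than manipulating an unwieldy radical expression for $\rho(u)$ as in~\eqref{eq:redex:beta:rho}, I would compute $\rho'(1)$ and $\rho''(1)$ by implicitly differentiating the defining relation $R(\rho(u), u) = 0$ at $u = 1$. From $B(u) = \frac{\rho(1)}{\rho(u)}$ one then obtains $B'(1) = -\frac{\rho'(1)}{\rho(1)}$ and the analogous second-order quantity, and the moment formulas~\eqref{eq:quasi:powers:moments} give $\mathbb{E}(X_n) \sim \frac{1}{32}n$ and $\mathbb{V}(X_n) \sim \frac{53}{2048}n$. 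Checking that $B''(1) + B'(1) - B'(1)^2 = \frac{53}{2048} \neq 0$ simultaneously discharges the variability condition, completing the proof.
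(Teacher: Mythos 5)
Your proposal is correct and follows essentially the same route as the paper: the paper only sketches the (Lambda) case, deferring to the detailed (Beta) argument, and your instantiation of that scheme (eliminating $S(z,u)$, locating the moving square-root singularity $\rho(u)$ with $\rho(1)=\tfrac{1}{4}$, applying the standard function scale and the Quasi-powers theorem) is exactly what is intended. Your use of implicit differentiation of $R(\rho(u),u)=0$ to obtain $\rho'(1)$ and $\rho''(1)$ is moreover precisely the computational shortcut recommended in the paper's closing remark of that section.
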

\subsubsection{(FVar) redexes}
The case of (FVar) redexes is a bit more involved and requires three layers of
production substitution in order to reach an explicit (FVar) production.
\begin{align}\label{eq:redex:fvar:spec}
\begin{split}
\cal{T} &::= \cal{N}~|~`l \cal{T}~|~\cal{T} \cal{T}~|~\overbrace{
    \overbrace{\underbrace{\underline{\idx{0} [\cal{T}/]}~|~\idx{0}
    [\Uparrow(\cal{S})]~|~\idx{0}
    [\uparrow]}_{\idx{0}[\cal{S}]}~|~(\succ \cal{N})[\cal{S}]}^{\cal{N}[\cal{S}]}~|~(`l \cal{T}) [\cal{S}]~|~(\cal{T} [\cal{S}])
    [\cal{S}]~|~(\cal{T} \cal{T})[\cal{S}]}^{\cal{T}[\cal{S}]}\\
    \cal{S} &::= \cal{T}/ ~|~ \Uparrow(\cal{S}) ~|~ \uparrow\\
    \cal{N} &::= \idx{0} ~|~ \succ \cal{N}.
\end{split}
\end{align}
The corresponding system of generating functions takes then the form
\begin{align}\label{eq:redex:fvar:gfun}
\begin{split}
    T(z,u) &= \dfrac{z}{1-z} + z T(z,u) + z {T(z,u)}^2 + z T(z,u) S(z,u)
        + (u-1) z^3 {T(z,u)}\\
    S(z,u) &= z T(z,u) + z S(z,u) + z.
\end{split}
\end{align}
\begin{prop}
    Let $X_n$ be a random variable denoting the number of (FVar)\nobreakdash-redexes
    in a random \luterm~of size $n$. Then, after standardisation, $X_n$
    converges in law to a Gaussian distribution with speed of convergence of
    order $O\left(\frac{1}{\sqrt{n}}\right)$. The limit expectation
    $\mathbb{E}(X_n)$ and variance $\mathbb{V}(X_n)$ satisfy
\begin{equation}\label{eq:redex:fvar:mean:variance:final}
    \mathbb{E}(X_n) \xrightarrow[n\to\infty]{} \frac{3}{256} n \quad \text{and}
    \quad \mathbb{V}(X_n) \xrightarrow[n\to\infty]{} \frac{729}{65536} n
\end{equation}
\end{prop}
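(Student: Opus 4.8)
The plan is to mirror the scheme used for the (Beta) redex case, now applied to the bivariate system~\eqref{eq:redex:fvar:gfun}, in which $u$ marks occurrences of (FVar) redexes, namely subterms of the form $\idx{0}[\cal{T}/]$. First I would eliminate $S(z,u)$ through the linear second equation, writing $S(z,u) = \frac{z\, T(z,u) + z}{1-z}$, and substitute it into the first equation. Clearing the denominator $1-z$ leaves a quadratic
\begin{equation}
    a(z,u)\, T(z,u)^2 + b(z,u)\, T(z,u) + c(z,u) = 0
\end{equation}
whose coefficients are rational in $z$ and affine in $u$; crucially, the marking perturbation $(u-1)z^3 T(z,u)$ is linear in $T$ and therefore contributes only to $b(z,u)$, while $a(z,u)$ and $c(z,u)$ do not depend on $u$. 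Solving the quadratic and selecting the branch analytic at the origin with non-negative coefficients (as forced by \autoref{th:pringsheim} together with the combinatorial reading of $[z^n u^k]T(z,u)$) yields a closed form for $T(z,u)$ whose dominant singularity is carried by the radicand $\Delta(z,u) = b(z,u)^2 - 4\,a(z,u)\,c(z,u)$.

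Next I would identify the moving singularity $\rho(u)$ as the branch of $\Delta(z,u) = 0$ passing through $\rho(1) = \tfrac14$, the dominant singularity of $T(z,1)$ recorded in \autoref{prop:luterms:catalan}. Since $\Delta(z,1)$ has a simple zero at $z = \tfrac14$, we have $\partial_z \Delta(\tfrac14,1) \neq 0$, and the implicit function theorem provides an analytic continuation of $\rho(u)$ to a complex neighbourhood of $u = 1$. The Newton--Puiseux theorem (\autoref{th:newton-puiseux}) then guarantees a local representation
\begin{equation}
    T(z,u) = \alpha(z,u) + \beta(z,u)\sqrt{1 - \frac{z}{\rho(u)}}
\end{equation}
with $\alpha, \beta$ analytic and non-vanishing near $(\tfrac14, 1)$. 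Fixing $u$ sufficiently close to $1$ and applying the standard function scale (\autoref{th:standard-func-scale}) gives $[z^n]T(z,u) \sim \frac{\beta(\rho(u),u)}{2\sqrt{\pi}\, n^{3/2}}\,\rho(u)^{-n}$, so that the probability generating function attains the quasi-powers shape
\begin{equation}
    p_n(u) = \frac{[z^n]T(z,u)}{[z^n]T(z,1)}
    = A(u)\, B(u)^{n} \left(1 + O\!\left(\tfrac{1}{n}\right)\right),
\end{equation}
with $A(u) = \beta(\rho(u),u)/\beta(\rho(1),1)$ and $B(u) = \rho(1)/\rho(u)$, both analytic at $u = 1$ and satisfying $A(1) = B(1) = 1$.

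Then I would invoke the Quasi-powers theorem (\autoref{th:quasi:powers:theorem}) with $\beta_n = \kappa_n = n$. After verifying the variability condition~\eqref{eq:quasi:powers:variability} for $B$ at $u = 1$, the theorem furnishes the asymptotic Gaussian law for $X_n$ together with the stated speed of convergence $O(1/\sqrt{n})$, while the moment formulas~\eqref{eq:quasi:powers:moments} give $\mathbb{E}(X_n) \sim B'(1)\, n$ and $\mathbb{V}(X_n) \sim \left(B''(1) + B'(1) - B'(1)^2\right) n$. Because $B(u) = \rho(1)/\rho(u)$, the quantities $B'(1)$ and $B''(1)$ reduce to $\rho'(1)$ and $\rho''(1)$ via $B'(1) = -\rho'(1)/\rho(1)$ and the analogous second-order relation, so that~\eqref{eq:redex:fvar:mean:variance:final} follows once these two derivatives are known.

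The step I expect to be the main obstacle is the explicit control of $\rho(u)$ needed both to check the variability condition and to pin down the exact rational constants $\tfrac{3}{256}$ and $\tfrac{729}{65536}$. Rather than dragging along an unwieldy closed-form radical for $\rho(u)$, as was done for the (Beta) case in~\eqref{eq:redex:beta:rho}, I would keep $\rho(u)$ implicit and extract $\rho'(1)$ and $\rho''(1)$ by differentiating the identity $\Delta(\rho(u), u) = 0$ once and twice at $u = 1$; this only requires evaluating partial derivatives of $\Delta$ at the single known point $(\tfrac14, 1)$. The entire moment and variability computation thereby collapses to a finite, mechanical local calculation, and confirming that $B''(1) + B'(1) - B'(1)^2$ is non-zero --- and equal to $\tfrac{729}{65536}$ --- becomes a routine, if tedious, arithmetic verification.
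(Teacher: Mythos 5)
Your proposal is correct and follows essentially the same route as the paper, which for (FVar) only records the specification~\eqref{eq:redex:fvar:spec} and the system~\eqref{eq:redex:fvar:gfun} and defers to the detailed (Beta) argument: eliminate $S(z,u)$, locate the moving square-root singularity $\rho(u)$ of the discriminant, apply the standard function scale and the Quasi-powers theorem with $\beta_n=\kappa_n=n$. Your choice to keep $\rho(u)$ implicit and extract $\rho'(1)$, $\rho''(1)$ by differentiating $\Delta(\rho(u),u)=0$ is exactly the shortcut the paper itself recommends in its closing remark, and it does reproduce the stated constants (e.g.\ $\rho'(1)=-3/1024$ gives $B'(1)=3/256$).
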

\subsubsection{(RVar) redexes}
Similarly to (FVar) redexes, (RVar) redexes require three layers of
substitution. The final layer, however, involves now $(\succ \cal{N})
[\cal{S}]$.
\begin{align}\label{eq:redex:rvar:spec}
\begin{split}
\cal{T} &::= \cal{N}~|~`l \cal{T}~|~\cal{T} \cal{T}~|~\overbrace{
    \overbrace{\idx{0}[\cal{S}]~|~\underbrace{\underline{(\succ
    \cal{N})[\cal{T}/]}~|~(\succ
    \cal{N})[\Uparrow(\cal{S})]~|~(\succ \cal{N})[\uparrow]}_{(\succ
    \cal{N})[\cal{S}]}}^{\cal{N}[\cal{S}]}~|~(`l \cal{T}) [\cal{S}]~|~(\cal{T} [\cal{S}])
    [\cal{S}]~|~(\cal{T} \cal{T})[\cal{S}]}^{\cal{T}[\cal{S}]}\\
    \cal{S} &::= \cal{T}/ ~|~ \Uparrow(\cal{S}) ~|~ \uparrow\\
    \cal{N} &::= \idx{0} ~|~ \succ \cal{N}.
\end{split}
\end{align}
When transformed, we obtain the following system of generating functions:
\begin{align}\label{eq:redex:rvar:gfun}
\begin{split}
    T(z,u) &= \dfrac{z}{1-z} + z T(z,u) + z {T(z,u)}^2 + z T(z,u) S(z,u)
        + (u-1) \dfrac{z^4}{1-z} {T(z,u)}\\
    S(z,u) &= z T(z,u) + z S(z,u) + z.
\end{split}
\end{align}
\begin{prop}
    Let $X_n$ be a random variable denoting the number of (RVar)\nobreakdash-redexes
    in a random \luterm~of size $n$. Then, after standardisation, $X_n$
    converges in law to a Gaussian distribution with speed of convergence of
    order $O\left(\frac{1}{\sqrt{n}}\right)$. The limit expectation
    $\mathbb{E}(X_n)$ and variance $\mathbb{V}(X_n)$ satisfy
\begin{equation}\label{eq:redex:rvar:mean:variance:final}
    \mathbb{E}(X_n) \xrightarrow[n\to\infty]{} \frac{1}{256} n \quad \text{and}
    \quad \mathbb{V}(X_n) \xrightarrow[n\to\infty]{} \frac{249}{65 536} n
\end{equation}
\end{prop}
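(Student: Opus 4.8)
The plan is to follow verbatim the scheme already carried out for the (Beta) case, now starting from the refined specification~\eqref{eq:redex:rvar:spec} and its associated bivariate system~\eqref{eq:redex:rvar:gfun}. First I would eliminate $S(z,u)$ using the linear relation $S(z,u) = \tfrac{z + z T(z,u)}{1-z}$ read off from the second equation, substitute it into the first, and clear the denominator $1-z$. The result is a quadratic in $T(z,u)$, schematically $z\,T^2 + \left(2z - 1 + (u-1)z^4\right)T + z = 0$, whose admissible branch (the one vanishing at $z=0$, since every \luterm~has size at least one) carries a minus sign in front of the square root. This yields a closed algebraic form for $T(z,u)$ in which the radicand $R(z,u) = \left(1 - 2z - (u-1)z^4\right)^2 - 4z^2$ governs the dominant singularity.

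Next I would locate the moving singularity $\rho(u)$. At $u=1$ the radicand collapses to $R(z,1) = 1 - 4z$, so the relevant factor is $1 - 4z - (u-1)z^4$ and $\rho(u)$ is its root branching off $z = \tfrac14$ as $u \to 1$. Rather than expanding the explicit quartic root, I would define $\rho(u)$ implicitly through $\Phi(z,u) := 1 - 4z - (u-1)z^4 = 0$ and invoke the implicit function theorem (here $\Phi_z(\tfrac14,1) = -4 \neq 0$) to argue that $\rho(u)$ continues analytically to a neighbourhood of $u=1$ with $\rho(1) = \tfrac14$. As in the (Beta) case, this licenses the representation $T(z,u) = `a(z,u) + `b(z,u)\sqrt{1 - z/\rho(u)}$ with $`a,`b$ non-vanishing near $(\tfrac14,1)$, and an application of the standard function scale (\autoref{th:standard-func-scale}) gives $[z^n]T(z,u) \sim `g\,(\rho(u))^{-n} n^{-3/2}$.

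Forming the probability generating function then yields $p_n(u) = A(u)\,B(u)^n\left(1 + O(1/n)\right)$ with $B(u) = \rho(1)/\rho(u)$ and $\beta_n = \kappa_n = n$, exactly matching the premises of the Quasi-powers theorem (\autoref{th:quasi:powers:theorem}). The moments then follow from~\eqref{eq:quasi:powers:moments}. Implicit differentiation of $\Phi(z,u)=0$ at $u=1$ gives $\rho'(1) = -\tfrac{1}{1024}$, whence $B'(1) = \tfrac{1}{256}$ and $\mathbb{E}(X_n) \sim \tfrac{1}{256}n$; a second implicit differentiation supplies $\rho''(1) = \tfrac{1}{32768}$, hence $B''(1) = -\tfrac{3}{32768}$, delivering the stated variance $\tfrac{249}{65536}n$.

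The main obstacle, as in every redex case, is verifying the variability condition~\eqref{eq:quasi:powers:variability}, namely $B''(1) + B'(1) - B'(1)^2 \neq 0$. The cleanest route is precisely the implicit differentiation above, which avoids ever writing down the explicit root of the quartic: assembling the three quantities gives the variance coefficient $\tfrac{249}{65536} > 0$, which both confirms the non-degeneracy required by the theorem and yields the asymptotically Gaussian limit law with speed of convergence $O\!\left(1/\sqrt{n}\right)$, since $\beta_n = n$.
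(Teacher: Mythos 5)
Your proposal is correct and follows essentially the same route the paper takes: reduce the system~\eqref{eq:redex:rvar:gfun} to a quadratic in $T(z,u)$, extract the moving singularity $\rho(u)$ from the radicand (which factors as $\bigl(1-4z-(u-1)z^4\bigr)\bigl(1-(u-1)z^4\bigr)$, the second factor being harmless near $(z,u)=(\tfrac14,1)$), and apply the Quasi-powers theorem; the implicit-differentiation computation of $\rho'(1)=-\tfrac{1}{1024}$ and $\rho''(1)=\tfrac{1}{32768}$ is precisely the shortcut the paper's closing remark recommends, and your resulting values $B'(1)=\tfrac{1}{256}$ and $B''(1)+B'(1)-B'(1)^2=\tfrac{249}{65536}$ match the stated mean and variance.
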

\subsubsection{(FVarLift) redexes}
The (FVarLift) redex follows the same successive transformation of the initial
specification $\cal{T}$. When (FVarLift) redexes are obtained and marked, we
get the following outcome specification:
\begin{align}\label{eq:redex:fvarlift:spec}
\begin{split}
\cal{T} &::= \cal{N}~|~`l \cal{T}~|~\cal{T} \cal{T}~|~\overbrace{
    \overbrace{\underbrace{\idx{0} [\cal{T}/]~|~\underline{\idx{0}
    [\Uparrow(\cal{S})]}~|~\idx{0}
    [\uparrow]}_{\idx{0}[\cal{S}]}~|~(\succ \cal{N})[\cal{S}]}^{\cal{N}[\cal{S}]}~|~(`l \cal{T}) [\cal{S}]~|~(\cal{T} [\cal{S}])
    [\cal{S}]~|~(\cal{T} \cal{T})[\cal{S}]}^{\cal{T}[\cal{S}]}\\
    \cal{S} &::= \cal{T}/ ~|~ \Uparrow(\cal{S}) ~|~ \uparrow\\
    \cal{N} &::= \idx{0} ~|~ \succ \cal{N}.
\end{split}
\end{align}
Consequently
\begin{align}\label{eq:redex:fvarlift:gfun}
\begin{split}
    T(z,u) &= \dfrac{z}{1-z} + z T(z,u) + z {T(z,u)}^2 + z T(z,u) S(z,u)
        + (u-1) z^3 {S(z,u)}\\
    S(z,u) &= z T(z,u) + z S(z,u) + z.
\end{split}
\end{align}
\begin{prop}
    Let $X_n$ be a random variable denoting the number of (FVarLift)\nobreakdash-redexes
    in a random \luterm~of size $n$. Then, after standardisation, $X_n$
    converges in law to a Gaussian distribution with speed of convergence of
    order $O\left(\frac{1}{\sqrt{n}}\right)$. The limit expectation
    $\mathbb{E}(X_n)$ and variance $\mathbb{V}(X_n)$ satisfy
\begin{equation}\label{eq:redex:fvarlift:mean:variance:final}
    \mathbb{E}(X_n) \xrightarrow[n\to\infty]{} \frac{1}{128} n \quad \text{and}
    \quad \mathbb{V}(X_n) \xrightarrow[n\to\infty]{} \frac{241}{32768} n
\end{equation}
\end{prop}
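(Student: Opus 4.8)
The plan follows the now-familiar template established for the (Beta), (App), (Lambda), (FVar) and (RVar) redexes, since the author explicitly states that these proofs share virtually the same scheme. The starting point is the bivariate system~\eqref{eq:redex:fvarlift:gfun}, which marks (FVarLift) redexes with the term $(u-1) z^3 S(z,u)$; this arises because a (FVarLift) redex matches the pattern $\idx{0}[\Uparrow(\cal{S})]$, contributing one closure constructor, one $\Uparrow$ constructor, and the zero index $\idx{0}$ of size one, hence the $z^3$ factor multiplying the substitution generating function $S(z,u)$. First I would solve the quadratic system~\eqref{eq:redex:fvarlift:gfun} for $T(z,u)$ in closed form, eliminating $S(z,u)$ using the second equation $S(z,u) = \frac{zT(z,u)+z}{1-z}$, thereby obtaining an explicit algebraic expression of the form $T(z,u) = \alpha(z,u) - \beta(z,u)\sqrt{R(z,u)}$ for a suitable radicand $R(z,u)$.

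Next I would locate the moving dominant singularity $\rho(u)$ as the smallest positive root of the radicand $R(z,u)$, selecting the branch satisfying $\lim_{u\to 1}\rho(u) = \frac{1}{4}$ so that it agrees with the dominant singularity of $T(z,1)$ guaranteed by~\autoref{prop:luterms:catalan}. By the implicit function theorem the root depends analytically on $u$ in a complex neighbourhood of $u=1$, so $\rho(u)$ admits an analytic continuation there. With the square-root-type singularity confirmed via the Newton--Puiseux expansion (see~\autoref{th:newton-puiseux}), $T(z,u)$ can be written locally as $T(z,u) = \alpha(z,u) + \beta(z,u)\sqrt{1 - z/\rho(u)}$ with $\beta$ non-vanishing near $(z,u)=(\frac{1}{4},1)$, and the standard function scale (see~\autoref{th:standard-func-scale}) yields the coefficient asymptotics $[z^n]T(z,u) \sim \gamma(u)\,\rho(u)^{-n} n^{-3/2}$.

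Forming the probability generating function $p_n(u) = [z^n]T(z,u)/[z^n]T(z,1)$, the polynomial prefactors cancel and I would exhibit the quasi-powers form $p_n(u) = A(u) B(u)^n (1 + O(1/n))$ with $A(u) = \beta(\rho(u),u)/\beta(\rho(1),1)$ and $B(u) = \rho(1)/\rho(u)$, matching the premises of the Quasi-powers theorem (see~\autoref{th:quasi:powers:theorem}) with $\beta_n = \kappa_n = n$. Verifying the variability condition~\eqref{eq:quasi:powers:variability} then certifies asymptotic normality with the stated $O(1/\sqrt{n})$ convergence rate, and the limit mean and variance are extracted from~\eqref{eq:quasi:powers:moments} as $\mathbb{E}(X_n) \sim B'(1) n$ and $\mathbb{V}(X_n) \sim (B''(1) + B'(1) - B'(1)^2) n$. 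Here a subtlety specific to this case deserves attention: unlike the (FVar) and (RVar) markings which multiply $T(z,u)$, the (FVarLift) term multiplies $S(z,u)$, so after substituting the closed form for $S$ the marking effectively contributes $(u-1)\frac{z^3(zT(z,u)+z)}{1-z}$; I must track this carefully when deriving the radicand and its $u$-derivatives.

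The main obstacle is computational rather than conceptual: obtaining $\rho(u)$ in closed form (analogous to the formidable expression~\eqref{eq:redex:beta:rho}) and then differentiating $B(u) = \rho(1)/\rho(u)$ twice at $u=1$ to confirm the non-degeneracy of the variability condition and to pin down the exact rational constants $\frac{1}{128}$ and $\frac{241}{32768}$ in~\eqref{eq:redex:fvarlift:mean:variance:final}. Rather than solving the defining quartic explicitly, I expect it is cleaner to compute $\rho'(1)$ and $\rho''(1)$ by implicit differentiation of the radicand equation $R(\rho(u),u)=0$, which sidesteps the unwieldy radical expression and reduces the verification to a finite, if tedious, symbolic calculation best delegated to a computer algebra system.
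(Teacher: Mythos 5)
Your proposal follows exactly the paper's template: the paper proves the (Beta) case in detail and explicitly states that the remaining redexes, including (FVarLift), are handled by the same scheme starting from the system~\eqref{eq:redex:fvarlift:gfun} with the $(u-1)z^3 S(z,u)$ marking term, locating the moving singularity $\rho(u)$, applying the Quasi-powers theorem, and extracting the constants via~\eqref{eq:quasi:powers:moments}. Your suggestion to compute $\rho'(1)$ and $\rho''(1)$ by implicit differentiation of the radicand equation rather than from an explicit closed form is precisely what the paper's own concluding remark recommends, so the proposal is correct and essentially identical in approach.
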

\subsubsection{(RVarLift) redexes}
(RVarLift) redexes are specified analogously to (FVarLift) redexes.
The deepest level of transformation involved now $(\succ \cal{N}) [\cal{S}]$
instead of $\idx{0}[\cal{S}]$ as in the case of (FVarLift) redexes.
The resulting specification takes the form
\begin{align}\label{eq:redex:rvarlift:spec}
\begin{split}
\cal{T} &::= \cal{N}~|~`l \cal{T}~|~\cal{T} \cal{T}~|~\overbrace{
    \overbrace{\idx{0}[\cal{S}]~|~\underbrace{(\succ
    \cal{N})[\cal{T}/]~|~\underline{(\succ
    \cal{N})[\Uparrow(\cal{S})]}~|~(\succ \cal{N})[\uparrow]}_{(\succ
    \cal{N})[\cal{S}]}}^{\cal{N}[\cal{S}]}~|~(`l \cal{T}) [\cal{S}]~|~(\cal{T} [\cal{S}])
    [\cal{S}]~|~(\cal{T} \cal{T})[\cal{S}]}^{\cal{T}[\cal{S}]}\\
    \cal{S} &::= \cal{T}/ ~|~ \Uparrow(\cal{S}) ~|~ \uparrow\\
    \cal{N} &::= \idx{0} ~|~ \succ \cal{N}.
\end{split}
\end{align}
And so, the associated system of generating function becomes
\begin{align}\label{eq:redex:rvarlift:gfun}
\begin{split}
    T(z,u) &= \dfrac{z}{1-z} + z T(z,u) + z {T(z,u)}^2 + z T(z,u) S(z,u)
        + (u-1) \frac{z^4}{1-z} {S(z,u)}\\
    S(z,u) &= z T(z,u) + z S(z,u) + z.
\end{split}
\end{align}
\begin{prop}
    Let $X_n$ be a random variable denoting the number of (RVarLift)\nobreakdash-redexes
    in a random \luterm~of size $n$. Then, after standardisation, $X_n$
    converges in law to a Gaussian distribution with speed of convergence of
    order $O\left(\frac{1}{\sqrt{n}}\right)$. The limit expectation
    $\mathbb{E}(X_n)$ and variance $\mathbb{V}(X_n)$ satisfy
\begin{equation}\label{eq:redex:rvarlift:mean:variance:final}
    \mathbb{E}(X_n) \xrightarrow[n\to\infty]{} \frac{1}{384} n \quad \text{and}
    \quad \mathbb{V}(X_n) \xrightarrow[n\to\infty]{} \frac{377}{147456} n
\end{equation}
\end{prop}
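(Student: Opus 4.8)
\emph{Proof proposal.} The plan is to follow verbatim the scheme established for (Beta) redexes, since the combinatorial specification~\eqref{eq:redex:rvarlift:spec} and the resulting bivariate system~\eqref{eq:redex:rvarlift:gfun} are already in place. First I would eliminate $S(z,u)$ from~\eqref{eq:redex:rvarlift:gfun}: the (linear) second equation gives $S(z,u) = \frac{z\left(T(z,u)+1\right)}{1-z}$, and substituting this into the first equation collapses the marking contribution $(u-1)\frac{z^4}{1-z}S(z,u)$ into $(u-1)\frac{z^5\left(T(z,u)+1\right)}{{\left(1-z\right)}^2}$. What remains is a single quadratic equation $a(z,u)\,{T(z,u)}^2 + b(z,u)\,T(z,u) + c(z,u) = 0$ with coefficients rational in $z$ and $u$. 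Choosing the branch analytic at the origin (equivalently, the sign making all coefficients $[z^n u^k]T(z,u)$ non-negative, exactly as in the selection argument for~\eqref{eq:plain:terms:genfun:final}) yields the closed form
\begin{equation*}
    T(z,u) = \frac{-b(z,u) - \sqrt{\Delta(z,u)}}{2\,a(z,u)}, \qquad \Delta(z,u) = b(z,u)^2 - 4\,a(z,u)\,c(z,u),
\end{equation*}
so that the dominant singularity is carried, as before, by the radicand $\Delta(z,u)$.

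Next, exactly as in the (Beta) case, I would define $\rho(u)$ to be the smallest positive root of $\Delta(z,u)$, selecting the unique branch with $\lim_{u\to 1}\rho(u) = \frac14$; this is consistent because $\Delta(z,1)$ specialises to the radicand of~\eqref{eq:plain:terms:genfun:final}, whose relevant root is $\zeta = \frac14$. A smoothness (implicit-function) argument shows $\rho(u)$ is analytic in a complex neighbourhood of $u=1$ and remains a simple zero of $\Delta$, so that $T(z,u)$ admits the local representation $T(z,u) = \alpha(z,u) + \beta(z,u)\sqrt{1 - z/\rho(u)}$ with $\alpha,\beta$ analytic and non-vanishing near $(z,u) = (\frac14,1)$. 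The standard function scale~(\autoref{th:standard-func-scale}) then gives $[z^n]T(z,u) \sim \gamma(u)\,{\rho(u)}^{-n} n^{-3/2}$ for a constant $\gamma(u)$, whence the probability generating function
\begin{equation*}
    p_n(u) = \frac{[z^n]T(z,u)}{[z^n]T(z,1)} = A(u)\,{B(u)}^{n}\left(1 + O\left(\tfrac{1}{n}\right)\right), \qquad A(u) = \frac{\beta(\rho(u),u)}{\beta(\rho,1)}, \quad B(u) = \frac{\rho(1)}{\rho(u)},
\end{equation*}
matches the premises of the Quasi-powers theorem~(\autoref{th:quasi:powers:theorem}) with $\beta_n = \kappa_n = n$, delivering the Gaussian limit law and the $O(1/\sqrt{n})$ rate of convergence.

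The main obstacle is computational rather than conceptual. Because the marking term feeds through $S(z,u)$, after elimination it raises the $z$-degree appearing in $\Delta(z,u)$ (a factor $z^5$ over ${(1-z)}^2$), so the explicit expression for $\rho(u)$ is markedly more unwieldy than~\eqref{eq:redex:beta:rho}, and the delicate part is verifying the variability condition~\eqref{eq:quasi:powers:variability} for $B(u) = \rho(1)/\rho(u)$. Rather than substituting a closed form for $\rho(u)$, I would obtain $\rho'(1)$ and $\rho''(1)$ by differentiating the identity $\Delta(\rho(u),u) = 0$ implicitly (once and twice) and evaluating at $u=1$, where $\rho(1) = \frac14$; solving the resulting linear relations yields $B'(1)$ and $B''(1)$ directly and confirms $B''(1) + B'(1) - {B'(1)}^2 \neq 0$. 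Finally, the moment formulas~\eqref{eq:quasi:powers:moments} give $\mathbb{E}(X_n) \sim B'(1)\,n = \frac{1}{384}n$ and $\mathbb{V}(X_n) \sim \left(B''(1) + B'(1) - {B'(1)}^2\right)n = \frac{377}{147456}n$, as stated in~\eqref{eq:redex:rvarlift:mean:variance:final}; the denominator $147456 = 384^2$ offers a convenient consistency check on the extracted derivatives.
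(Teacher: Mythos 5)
Your proposal is correct and follows essentially the same route as the paper: the paper itself only sketches the (RVarLift) case, deferring to the scheme detailed for (Beta) redexes, and its closing remark explicitly recommends the very shortcut you adopt, namely extracting $\rho'(1)$ and $\rho''(1)$ by implicit differentiation of $P(\rho(u),u)=0$ rather than manipulating a closed form for $\rho(u)$. The elimination of $S(z,u)$, the branch selection, the square-root singularity analysis, and the application of the Quasi-powers theorem all match the paper's argument.
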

\subsubsection{(VarShift) redexes}
The final case of (VarShift) redexes can be approached as before. Marking
$\cal{N}[\uparrow]$ in the associated specification we obtain
\begin{align}\label{eq:redex:varshift:spec}
\begin{split}
\cal{T} &::= \cal{N}~|~`l \cal{T}~|~\cal{T} \cal{T}~|~\overbrace{
    \underbrace{
    \cal{N}[\cal{T}/]~|~
        \cal{N}[\Uparrow(\cal{S})]~|~\underline{\cal{N}[\uparrow]}}_{
    \cal{N}[\cal{S}]}~|~(`l \cal{T}) [\cal{S}]~|~(\cal{T} [\cal{S}])
    [\cal{S}]~|~(\cal{T} \cal{T})[\cal{S}]}^{\cal{T}[\cal{S}]}\\
    \cal{S} &::= \cal{T}/ ~|~ \Uparrow(\cal{S}) ~|~ \uparrow\\
    \cal{N} &::= \idx{0} ~|~ \succ \cal{N}.
\end{split}
\end{align}
And so
\begin{align}\label{eq:redex:varshift:gfun}
\begin{split}
    T(z,u) &= \dfrac{z}{1-z} + z T(z,u) + z {T(z,u)}^2 + z T(z,u) S(z,u)
        + (u-1) \frac{z^3}{1-z}\\
    S(z,u) &= z T(z,u) + z S(z,u) + z.
\end{split}
\end{align}
\begin{prop}
    Let $X_n$ be a random variable denoting the number of (VarShift)\nobreakdash-redexes
    in a random \luterm~of size $n$. Then, after standardisation, $X_n$
    converges in law to a Gaussian distribution with speed of convergence of
    order $O\left(\frac{1}{\sqrt{n}}\right)$. The limit expectation
    $\mathbb{E}(X_n)$ and variance $\mathbb{V}(X_n)$ satisfy
\begin{equation}\label{eq:redex:varshift:mean:variance:final}
    \mathbb{E}(X_n) \xrightarrow[n\to\infty]{} \frac{1}{64} n \quad \text{and}
    \quad \mathbb{V}(X_n) \xrightarrow[n\to\infty]{} \frac{57}{4096} n
\end{equation}
\end{prop}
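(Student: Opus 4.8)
The plan is to follow verbatim the scheme used for the (Beta) rule, now applied to the system~\eqref{eq:redex:varshift:gfun} already obtained from the marked specification~\eqref{eq:redex:varshift:spec}. First I would eliminate $S(z,u)$ through its defining equation, giving $S(z,u) = \frac{z + z\,T(z,u)}{1-z}$, and substitute this into the equation for $T(z,u)$. Clearing the denominator $1-z$ collapses the latter to a quadratic in $T(z,u)$, namely
\begin{equation}
    z\,{T(z,u)}^2 - (1-2z)\,T(z,u) + z + (u-1)z^3 = 0.
\end{equation}
Solving and retaining the branch with non-negative coefficients yields
\begin{equation}
    T(z,u) = \frac{1-2z-\sqrt{1-4z-4(u-1)z^4}}{2z},
\end{equation}
so that the dominant singularity is carried by the radicand $1 - 4z - 4(u-1)z^4$. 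At $u=1$ this reduces to $\sqrt{1-4z}$, recovering $T(z,1) = T(z)$ of~\autoref{prop:luterms:catalan} as a consistency check.

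Next I would define the moving singularity $\rho(u)$ as the branch of the root of $1 - 4z - 4(u-1)z^4 = 0$ with $\rho(1) = \tfrac14$. Since $\partial_z\bigl(1 - 4z - 4(u-1)z^4\bigr) = -4 - 16(u-1)z^3$ equals $-4 \neq 0$ at $(z,u) = (\tfrac14,1)$, the implicit function theorem guarantees that $\rho(u)$ is analytic in a neighbourhood of $u=1$ and varies smoothly there. Exactly as in the (Beta) case, $T(z,u)$ then admits near $(z,u)=(\tfrac14,1)$ a representation $T(z,u) = `a(z,u) + `b(z,u)\sqrt{1 - \tfrac{z}{\rho(u)}}$ with $`a,`b$ non-vanishing, so the standard function scale (see~\autoref{th:standard-func-scale}) delivers $[z^n]T(z,u) = `g\,{\rho(u)}^{-n}\bigl(1 + O(1/n)\bigr)$ for $u$ fixed sufficiently close to one.

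Forming $p_n(u) = [z^n]T(z,u)/[z^n]T(z,1)$ then places it in quasi-powers form with $A(u) = `b(\rho(u),u)/`b(\rho(1),1)$, $B(u) = \rho(1)/\rho(u)$ and $\beta_n = \kappa_n = n$, matching the hypotheses of~\autoref{th:quasi:powers:theorem}. The moments follow from~\eqref{eq:quasi:powers:moments} once the derivatives of $\rho$ at $u=1$ are known. Implicitly differentiating $1 - 4\rho(u) - 4(u-1){\rho(u)}^4 = 0$ at $u=1$ gives $\rho'(1) = -{\rho(1)}^4 = -\tfrac{1}{256}$, whence $B'(1) = -\rho'(1)/\rho(1) = \tfrac{1}{64}$, matching the stated expectation; a second implicit differentiation supplies $\rho''(1)$ and thus the variance.

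The only genuinely delicate point is verifying the variability condition~\eqref{eq:quasi:powers:variability}, i.e.~that $B''(1) + B'(1) - {B'(1)}^2 \neq 0$. Given the explicit description of $\rho(u)$ as a root of a quartic this is a direct, if mildly tedious, computation; the resulting positive variance $\tfrac{57}{4096}n$ in~\eqref{eq:redex:varshift:mean:variance:final} confirms that the condition holds. The remaining steps — fixing the branch sign by non-negativity of coefficients and checking that $`b(z,u)$ stays non-vanishing at the singularity — are routine and identical in spirit to the (Beta) argument, so I expect no essential difficulty beyond bookkeeping.
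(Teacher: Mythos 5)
Your proposal is correct and follows exactly the proof scheme the paper prescribes for all redex propositions (the (Beta) template plus the remark on extracting $\rho'(1)$, $\rho''(1)$ by implicit differentiation of the radicand): I verified that eliminating $S(z,u)$ does yield the quadratic $z\,T^2-(1-2z)T+z+(u-1)z^3=0$, that $\rho'(1)=-\tfrac{1}{256}$ gives $B'(1)=\tfrac{1}{64}$, and that $\rho''(1)=\tfrac{1}{2048}$ gives $B''(1)+B'(1)-{B'(1)}^2=\tfrac{57}{4096}$, matching~\eqref{eq:redex:varshift:mean:variance:final}. No gaps; nothing further to add.
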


\begin{remark}
    In order to facilitate the intensive calculations involved in obtaining
    redex distributions, one can use formulas~\eqref{eq:quasi:powers:moments}
    and the implicit form of the singularity $\rho(u)$ defined as a root of an
    appropriate radicand expression $P(z,u)$ of the corresponding bivariate
    generating function $T(z,u)$. The quantities $\rho'(1)$ and $\rho''(1)$ can
    be extracted using implicit derivatives of the equation $P(\rho(u),u) = 0$.
\end{remark}

\autoref{tab:redex:outline} outlines the obtained means and variances for all
considered $`l`y$\nobreakdash-redexes.

\begin{table}[hbt!]
\caption{List of redex distributions in descending order of limit means.
    Specific values are approximated to the sixth decimal point.}
\label{tab:redex:outline}
\begin{tabular}{|c|l|l|}
  \hline
    Redex & Mean & Variance \\
  \hline
    (Beta) & $0.046875 n$ & $0.037354 n$ \\
    (App) & $0.031250 n$ & $0.021973 n$ \\
    (Lambda) & $0.031250 n$ & $0.025879 n$ \\
    (VarShift) & $0.015625 n$ & $0.013916 n$ \\
    (FVar) & $0.011719 n$ & $0.011124 n$ \\
    (FVarLift) & $0.007812 n$ & $0.007355 n$ \\
    (RVar) & $0.003906 n$ & $0.003799 n$ \\
    (RVarLift) & $0.002604 n$ & $0.002557 n$ \\
  \hline
\end{tabular}
\end{table}

\section{Conclusions}\label{sec:conclusions}
Our contribution is a step towards the quantitative analysis of substitution
resolution and, in particular, the average-case analysis of abstract machines
associated with calculi of explicit substitutions. Although we focused on
\lucalculus, other calculi are readily amenable to similar analysis. Our
particular choice is motivated by the relative, compared to other calculi of
explicit substitutions, simple syntax of $`l`y$. With merely eight rewriting
rules, $`l`y$ is one of the conceptually simplest calculi of explicit
substitutions. Notably, rewriting rules contribute just to the technical part of
the quantitative analysis, not its general scheme. Consequently, we expect that
investigations into more complex calculi might be more technically challenging,
however should not pose significantly more involved issues.

Our quantitative analysis exhibited that typical \luterms~represent, in a strong
sense, intrinsically non-strict computations of the classic \lcalculus.
Typically, substitutions are not ceaselessly evaluated, but rather suspended in
their entirety; almost all of the encoded computation is suspended under
closures. Not unexpectedly, on average, the most frequent redex is (Beta). In
the $`y$ fragment of $`l`y$, however, the most recurrent redexes are, in order,
(App) and (Lambda). The least frequent, and at the same time the most intricate
redex, is (RVarLift). Let us note that such a diversity of redex frequencies
might be exploited in practical implementations. For instance, knowing that
specific redexes are more frequent than others, abstract machines might be
aptly optimised.

Finally, as an unexpected by-product of our analysis, we exhibited a
size-preserving bijection between \luterms~and plane binary trees, enumerated by
the famous Catalan numbers. Notably, such a correspondence has practical
implications. Specifically, we established an exact-size sampling scheme for
random \luterms~based on known samplers for the latter structures.  Consequently,
it is possible to effectively generate random \luterms~of size $n$ in $O(n)$
time.

\section*{Acknowledgements}
We would like to thank Sergey Dovgal for fruitful discussions and valuable
comments.

\bibliographystyle{alpha}
\bibliography{references}

\end{document}